\newcommand{\nequation}{\setcounter{equation}{0}}
\renewcommand{\theequation}{\mbox{\arabic{section}.\arabic{equation}}}
\newcommand{\R}{{\Bbb R}}
\newcommand{\C}{{\Bbb C}}
\newcommand{\Z}{{\Bbb Z}}
\newcommand{\proofbegin}{\noindent{\it Proof.\,\,}}
\newcommand{\proofend}{\hfill$\Box$\bigskip}
\DeclareMathOperator{\im}{Im}
\DeclareMathOperator{\re}{Re}
\newcommand{\res}{\text{\upshape Res\,}}
\newcommand{\dist}{\text{\upshape dist}}
\newtheorem{theorem}{Theorem}[section]
\newtheorem{proposition}[theorem]{Proposition}
\newtheorem{corollary}[theorem]{Corollary}
\newtheorem{lemma}[theorem]{Lemma}
\newtheorem{remark}[theorem]{Remark}
\newtheorem{example}[theorem]{Example}
\newtheorem{figuretext}{Figure}
\title[The KdV equation on the half-line]{The KdV equation on the half-line: \\ Time-periodicity and mass transport}
\author{Jerry L. Bona}
\address{J.L.B.: Department of Mathematics, Statistics and Computer Science, University of Illinois at Chicago, 851 S. Morgan Street MC 249, 
Chicago, Il 60607, USA.}  \email{\tt jbona@uic.edu}
\author{Jonatan Lenells}
\address{J.L.: Department of Mathematics, KTH Royal Institute of Technology, Stockholm, Sweden}
\email{\tt jlenells@kth.se}
\begin{document}

\begin{abstract} 
\noindent  The work presented here emanates from questions arising from experimental observations of the propagation of surface water waves.  
The experiments in question featured a periodically moving wavemaker located at one end of a flume that generated unidirectional waves of
relatively small amplitude and long wavelength when compared with the undisturbed depth.  It was observed that the wave profile at any point
down the channel very quickly became periodic in time with the same period as that of the wavemaker.  One of the questions dealt with here 
is whether or not such a property holds for model equations for such waves.  In the present discussion, this is examined in the context of the 
Korteweg-de Vries equation using the recently developed version of the inverse scattering theory for boundary value problems put forward
by Fokas and his collaborators.  It turns out that the Korteweg-de Vries equation does possess the properly that solutions at a fixed point
down the channel  have the property of asymptotic periodicity in time when forced periodically at the boundary.
  However, a more subtle issue to do with conservation of mass 
fails to hold at the second order in a small parameter which is the typical  wave amplitude divided by the undisturbed depth.    
\end{abstract}

\maketitle

\noindent
{\small{\sc AMS Subject Classifications (2010)}:  35B30, 35C15, 35Q53, 37K10, 37K15, 76B15, 86-05.

\noindent
{\small{\sc Keywords}: Water waves, initial-boundary-value problem, wave tank experiments,  Korteweg-de Vries equation.}


\section{Introduction}\nequation
The propagation of long-crested, unidirectional, small-amplitude, long wavelength disturbances 
 over a featureless,  flat bottom in shallow water can be      approximately described 
by  Korteweg-de Vries--type  equations.  A one-parameter class of such 
equations takes the form
\begin{align}\label{kdvalphabeta}
  u_t + u_x + \alpha uu_x + \beta\big (\mu u_{xxx} - (1 - \mu)u_{xxt}\big) = 0.
\end{align}
Here, the independent variable $x$ is proportional to distance in the direction of propagation
while $t$ is proportional to elapsed time.  The dependent variable $u(x,t)$ is proportional to to the deviation of the free surface 
from its rest position at the point corresponding to $x$ at time $t$.  
The real parameters 
$$
\alpha = \frac{a}{h} \quad {\rm and}  \quad  \beta = \frac{h^2}{\lambda^2}
$$
 are defined in terms of a typical amplitude $a$ and wavelength $\lambda$ as well as the undisturbed depth $h$ of the water.  
 The variables are scaled so that $u$ and its partial derivatives are formally all of order one while $\alpha$ and $\beta$ are assumed to 
 be small compared to one.  The parameter $\mu$ is  
a modeling parameter that in principle can take any real value.  However, the initial-value problem for the model will not
be well posed unless $\mu \leq 1$.  

In a flat-bottomed, laboratory channel, Zabusky and Galvin \cite{zabusky} ran experiments showing qualitative agreement 
between measurements and the model's predictions for the case $\mu = 1$, the classical Korteweg-de Vries equation (KdV equation henceforth).    
Later work by Hammack and Segur \cite{hammack} continued this line of investigation and also found qualitative agreement.  The detailed accuracy of such models in
the case $\mu = 0$, the BBM equation,  was investigated in a series of wave tank experiments 
reported in \cite{BPS1981}.   In these experiments, a paddle-type wavemaker mounted at  one end of the tank 
was oscillated periodically and
 the resulting wave motion was monitored at several points down the channel.  More precisely, 
 four measurements of the wave motion were taken at points $x = 0 < x_1 < x_2 < x_3$.
 This produced four time series, $u(0,t), u(x_1,t), u(x_2,t)$ and $u(x_3,t), \, t \geq 0$.     
 
These measurements suggested an initial-boundary-value problem for \eqref{kdvalphabeta} wherein the measurement  
  $u(0,t) = g_0(t)$ was taken as boundary data for the equation and  the initial data $u(x,0) = u_0(x)$ was  
  identically  zero, corresponding to the water
 being initially at rest.  The predictions of this initial-boundary-value problem were then compared directly to the other three time series.  
 As the equation \eqref{kdvalphabeta}
 is an 
approximation for waves moving only to the right, the experiment ceases as soon as
the waves reach the end of the channel and reflection becomes relevant. 
 Hence, a natural boundary condition at $x = L$, the end of the tank, is $u(L,t) = 0$ and,
in the case of the KdV equation, the second boundary condition $u_x(L,t) = 0$ is also required.  
However, since a lateral boundary condition at the end of
the channel away from the wavemaker is irrelevant to the motion prior to the wave reaching it, it is 
mathematically easier to simply push the right-hand boundary to infinity.   Rigorous 
justification of this procedure on the time scale where there is no motion at $x = L$ can be found in \cite{BCSZ1} and \cite{BCSZ2}.   
The outcome of the comparisons made in  \cite{BPS1981} is that the just-described initial-boundary-value problem works
 quantitatively quite well, even for rather large 
values of the Stokes' number $S = \alpha/\beta$.  

In the course of examining the results of the experiments just described, two qualitative features of the wave motion emerged. 
 The goal of the present essay is to address  rigorously these two  aspects  in the context of the KdV equation, $\mu = 1$. 

\medskip\noindent
A. {\bf Time-periodicity.} 
The experimental data indicate that a periodically moving wavemaker gives rise to measurements $u(0,t)$ and $u(x_j,t), \, j = 1,2,3,$ which are asymptotically time-periodic with the same period as that of
the wavemaker. In other words, if the wave maker oscillates with period $t_p$, then the functions $u(0,t)$ and $u(x_j,t)$ approach functions which are periodic in $t$ with period $t_p$ as $t \to \infty$.  Indeed, in the experiments, this approach is seen to be very rapid.  

\medskip\noindent
B. {\bf Mass transport. }
In the experimental set-up, the total mass of the water in the channel is evidently constant. As the wavemaker oscillates periodically, the net amount of water added to the region beyond the first measurement point $x = 0$ oscillates accordingly. Thus the function $M(t) = \int_0^\infty u(x,t) dx$ is expected to settle down to periodic oscillations around zero for large values of $t$. 

\vspace{.15cm}

The plan of the paper is to introduce the principal theorems for both the linear problem in which the nonlinearity is dropped and the nonlinear problem in the next 
section.   
This will include a discussion of previous work on these problems.   Theorem 2.1 will be proved in 
Section 3 while the nonlinear Theorem 2.2 and the resulting Corollary 2.3 will be dealt with in Section 4. The proofs are inspired by the developments in 
\cite{FL2012} where the nonlinear Schr\"odinger equation with asymptotically time-periodic data was considered. 
As mentioned, attention will be given only to the  Korteweg-de Vries equation, the case $\mu = 1$. 
This is because the main tool used in the present investigation is inverse scattering theory. As far as we know,   
the model \eqref{kdvalphabeta} does not have an inverse scattering theory if $\mu \neq 1$.     A brief concluding section
includes not only a summary of the results, but brief remarks on the implications for wave tank experiments.

\section{Main results}\nequation\label{mainsec}
In this section,  the two principal results of our study are stated and discussed. The proofs 
are presented  in Sections \ref{linearsec} and \ref{nonlinearsec}.  An elementary rescaling of
the independent and dependent variables assures that we may take $\alpha = 6$ and $\beta = 1$ in (\ref{kdvalphabeta}). 
However, it must be remembered that the resulting boundary data  $\alpha g_0(\beta^\frac12 t)/6$ now depends upon the parameters $\alpha$ and $\beta$.    

The mathematical problem under consideration is then the equation
\begin{align}\label{kdv}
  u_t + u_x + 6uu_x + u_{xxx} = 0, \qquad x > 0, \quad t > 0,
\end{align}
together with the initial and boundary conditions
\begin{equation}  \label{auxiliary}
  u(x,0) = u_0(x) \equiv 0, \; {\rm for} \;  x \geq 0 \;\, {\rm and} \;\, u(0,t)  = g_0(t), \; {\rm for} \; t \geq 0,
\end{equation}
where the given Dirichlet boundary value $g_0$ is taken to be smooth and compatible with the vanishing initial data at $t = 0$, which is 
to say $g_0(0) = 0$. 
This initial-boundary value problem has received considerable attention.   It is known to be globally well posed in a variety of circumstances to do 
with restrictions on the initial and boundary data (these development started with    \cite{BW1} and  \cite{BW2}; see \cite{BSZ2008} and the references contained therein
for a more up-to-date appraisal).   As the present discussion derives directly from experimental results, we are not going to be specially concerned with 
sharp hypotheses on the data.   The answers to the issues just mentioned  are the focus.  Also, while the Korteweg-de Vries equation is known rigorously to approximate well
solutions of the full, inviscid water wave problem (see  \cite{AABCW}, \cite{BCL}, \cite{BPS1983},  \cite{craig}), that fact depends upon smoothness of the auxiliary data.  Without sufficient 
smoothness, there is no approximation.  

\subsection{The linear limit}
The first theorem answers the questions posed in the introduction in the affirmative in the case of the linearized version of equation (\ref{kdv}).

\begin{theorem}\label{linearth}
Let $u(x,t)$ be a sufficiently smooth solution of the linearized KdV equation 
\begin{align}\label{linearkdv}
  u_t + u_x + u_{xxx} = 0, \qquad x > 0, \quad t > 0.
\end{align}
with vanishing initial data and compatible, periodic Dirichlet data $u(0,t) = g_0(t)$ of period $t_p>0$, i.e.,
$$g_0(0) = 0 \quad \text{and}  \quad g_0(t + t_p) - g_0(t) = 0 \quad \text{for} \quad t > 0.$$ 
\begin{enumerate}[$(a)$]
\item For any fixed $x \geq 0$, $u(x,t)$ is asymptotically time-periodic with period $t_p$.  More precisely, for each $x \in \R$ and as $t \to \infty$, 
\begin{align}\label{linearqdifference}
  u(x,t + t_p) - u(x,t) = O(t^{-\frac{3}{2}}).
\end{align}

\item For any fixed $x \geq 0$, the mass function $M(x,t) = \int_x^\infty u(x,t) dx$ has the property that as $t \to \infty$, 
\begin{align}\label{linearMdifference}
  M(x, t + t_p) - M(x,t) = \int_0^{t_p} g_0(t')dt' + O(t^{-\frac{3}{2}}).
\end{align}
In particular, if $g_0(t)$ has zero average, then $M(x, t)$ is asymptotically time-periodic with period $t_p$.
\end{enumerate}
\end{theorem}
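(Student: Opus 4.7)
The plan is to adapt the approach of \cite{FL2012}: first use the Fokas unified transform method (UTM) to obtain a contour-integral representation of $u(x,t)$, then isolate an exactly $t_p$-periodic piece using the Fourier expansion of $g_0$, and finally analyse the remaining transient part by steepest descent. Let $\omega(k) = k - k^3$ denote the dispersion relation and set $D^+ = \{k \in \mathbb{C}^+ : \operatorname{Im}\omega(k) > 0\}$, asymptotically the wedge $\pi/3 < \arg k < 2\pi/3$. Since \eqref{linearkdv} is third order in $x$, the UTM yields a representation of the form
\[
u(x,t) = -\frac{1}{2\pi}\int_{\partial D^+} e^{ikx - i\omega(k)t}\bigl(c_0(k)\tilde g_0 + c_1(k)\tilde g_1 + c_2(k)\tilde g_2\bigr)(k,t)\,dk,
\]
with $\tilde g_j(k,t) := \int_0^t e^{i\omega(k)s}g_j(s)\,ds$, where $g_0$ is the prescribed Dirichlet datum and $g_1 = u_x(0,\cdot),\ g_2 = u_{xx}(0,\cdot)$ are the unknown traces. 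The global relation, combined with the two symmetries $\nu_1(k), \nu_2(k)$ of $\omega$ (the roots of $\omega(\nu)=\omega(k)$ besides $k$), eliminates $\tilde g_1,\tilde g_2$ in favour of $\tilde g_0$.

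Next, I would expand the periodic datum in Fourier series, $g_0(s) = \sum_n a_n e^{-i\Omega_n s}$ with $\Omega_n = 2\pi n/t_p$; the same decomposition descends to $g_1,g_2$ through the global relation. The elementary identity
\[
\int_0^t e^{i(\omega(k)-\Omega_n)s}\,ds = \frac{e^{i(\omega(k)-\Omega_n)t}-1}{i(\omega(k)-\Omega_n)}
\]
separates the representation into
\[
u(x,t) = u_{\mathrm{per}}(x,t) + u_{\mathrm{trans}}(x,t),
\]
where $u_{\mathrm{per}}$ is a sum of contour integrals each carrying a pure $e^{-i\Omega_n t}$ factor (hence exactly $t_p$-periodic), and $u_{\mathrm{trans}}(x,t) = \frac{1}{2\pi i}\int_{\partial D^+} e^{ikx - i\omega(k)t}\,H(k)\,dk$ for a function $H$ independent of $t$. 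For part (a), since $u_{\mathrm{per}}$ is exactly $t_p$-periodic, $u(x,t+t_p) - u(x,t) = u_{\mathrm{trans}}(x,t+t_p) - u_{\mathrm{trans}}(x,t)$. I would deform $\partial D^+$ down to a contour meeting the real axis at the stationary points $k^* = \pm 1/\sqrt{3}$ of the phase ($\omega'(k^*) = 0$, $\omega''(k^*) = \mp 2\sqrt{3}$), accounting for any residues crossed (which feed back into $u_{\mathrm{per}}$), and apply a full steepest-descent expansion to obtain the claimed $O(t^{-3/2})$ bound of \eqref{linearqdifference}.

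For part (b), integrating the UTM representation against $\int_x^\infty dx'$ inserts the extra factor $1/(-ik)$ and creates a simple pole at $k=0$. Since $\omega(0) = 0$ and the global relation at $k=0$ leaves only the term $c_0(0)\tilde g_0(0,t)$ active, the $k=0$ residue contributes precisely $\tilde g_0(0,t) = \int_0^t g_0(s)\,ds$. Over a period, $\int_t^{t+t_p}g_0(s)\,ds = \int_0^{t_p}g_0(s)\,ds$ by $t_p$-periodicity, and the remaining dispersive contour integral is $O(t^{-3/2})$ by exactly the argument of part (a); this yields \eqref{linearMdifference} and the zero-mean corollary. The main obstacle will be establishing the improved $O(t^{-3/2})$ rate rather than the generic $O(t^{-1/2})$ produced by naive stationary phase: I expect this to follow from an exact vanishing of the stationary-phase prefactor at $k^* = \pm 1/\sqrt{3}$, caused by the collapse of the symmetries $\nu_1,\nu_2$ at these critical points of $\omega$ and the corresponding structure forced by the global relation on the combination $H(k)$ near $k^*$; carefully verifying this cancellation is the technical heart of the argument.
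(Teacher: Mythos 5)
Your overall strategy is the one the paper uses---the Fokas representation, elimination of the unknown traces $g_1,g_2$ through the global relation and the symmetries $\nu_1(k),\nu_2(k)$, and a steepest-descent analysis at the saddle points---and your conjecture about the source of the improved rate is correct. After the elimination, the coefficient of $\tilde g_0$ in the representation is exactly $f'(k)=2i(12k^2-1)$ (in the paper's normalization $f(k)=2i(4k^3-k)$, with saddles at $\pm a=\pm\tfrac{1}{2\sqrt3}$, corresponding to your $k^*=\pm 1/\sqrt3$), and it vanishes simply at the saddles precisely because the roots $\nu_j$ collide there. Since the contour passes \emph{through} each saddle, a simple zero of the amplitude kills the $t^{-1/2}$ term and, by parity of the next term in the local expansion, the $t^{-1}$ term as well, leaving $O(t^{-3/2})$; this is the content of Proposition \ref{steepprop}, proved via the substitution $l=f(k)-f(\pm a)$, Watson's lemma, and the cancellation of the $t^{-1}$ contributions from the two branches through each saddle.

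Where you diverge is the Fourier decomposition $u=u_{\mathrm{per}}+u_{\mathrm{trans}}$. The paper avoids it: starting from $u(x,t)=\frac{1}{2i\pi}\int_{\partial D_3}f'(k)e^{-2ikx-f(k)t}\int_0^t e^{f(k)s}g_0(s)\,ds\,dk$, the difference $u(x,t+t_p)-u(x,t)$ collapses, after the substitution $s\mapsto s+t_p$ and periodicity of $g_0$, to a single integral $\frac{1}{2i\pi}\int_{\partial D_3}f'(k)e^{-2ikx-f(k)t}\int_0^{t_p}e^{f(k)(s-t_p)}g_0(s)\,ds\,dk$, to which Proposition \ref{steepprop} applies directly. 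Your route has a genuine gap in the resonant case: if some harmonic $\Omega_n=2\pi n/t_p$ with $a_n\neq 0$ equals a critical value of the dispersion relation (i.e.\ $f(k)+i\Omega_n=0$ has a double root at a saddle, which happens when $\Omega_n=\pm\tfrac{2}{3\sqrt3}$), then the two pieces of $\bigl(e^{i(\omega(k)-\Omega_n)t}-1\bigr)/\bigl(i(\omega(k)-\Omega_n)\bigr)$ that you assign separately to $u_{\mathrm{per}}$ and $u_{\mathrm{trans}}$ each acquire a pole \emph{at the saddle point}, and the steepest-descent estimate for $u_{\mathrm{trans}}$ breaks down. Theorem \ref{linearth} imposes no restriction on $t_p$ or on the Fourier content of $g_0$ (whereas the nonlinear Theorem \ref{mainth} explicitly excludes these frequencies via (\ref{omegaassumption})), so this case must be handled---most easily by abandoning the split and using the paper's change-of-variables trick. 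For part (b), your $k=0$ residue computation is right and matches the paper's, but integrating the representation over $x'\in(x,\infty)$ needs justification on the real segment $(-a,a)$ of $\partial D_3$, where $e^{-2ikx'}$ does not decay; the paper instead uses the PDE to write $\partial_t M=u+u_{xx}$ at the left endpoint, integrates by parts in $t'$, and extracts the $k=0$ residue from the indented contour $\partial\hat D_3$.
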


Asymptotic periodicity is established for the linear problem for both the KdV and the BBM equation in \cite{BW}.  These results 
are obtained  by classical energy 
estimates.   The theory reported there is not as sharp as that obtained here using inverse scattering techniques.   In any case,
 the linear inverse scattering theory is needed for our analysis of the nonlinear problem.

\subsection{The nonlinear problem}
In the  second theorem,   nonlinear corrections to  Theorem \ref{linearth} are kept and estimated. 
Thus, consider a perturbative solution
$$u(x,t) = \epsilon u_1(x,t) + \epsilon^2 u_2(x,t) + O(\epsilon^3)$$
of (\ref{kdv}) with Dirichlet data
$$g_0(t) = \epsilon g_{01}(t) + \epsilon^2 g_{02}(t)  + O(\epsilon^3),$$
where $\epsilon > 0$ is a small parameter.
The first and second Neumann boundary values of the solution are 
$$g_1(t) = u_x(0,t) \quad {\rm and} \quad g_2(t) = u_{xx}(0,t).$$
Their respective perturbative expansions are written as 
\begin{align}\label{g1g2}
g_1(t) = \epsilon g_{11}(t)  + \epsilon^2 g_{12}(t)  + O(\epsilon^3), \qquad
g_2(t) = \epsilon g_{21}(t) + \epsilon^2  g_{22}(t) + O(\epsilon^3).
\end{align}
For definiteness, the results are presented when the Dirichlet data comprise  a periodic sine-wave.   
Similar results can be obtained for other periodic boundary forcings. It is worth mentioning that
the measured boundary conditions in the experiments reported in \cite{BPS1981} closely resemble 
 a sine wave while the boundary conditions used in the  sediment transport study \cite{BBR}
 were modeled exactly as sine waves with the field measured amplitudes and frequencies.  

\begin{theorem}\label{mainth}
Let $\omega \in \R$ be a non-zero constant such that
\begin{align}\label{omegaassumption}
  |\omega| \neq \frac{2}{3\sqrt{3}} \quad \text{and}\quad |\omega| \neq \frac{1}{3\sqrt{3}}.
\end{align}
Let $u(x,t)$ be a solution of (\ref{kdv}) with boundary data
$$g_0(t) = \epsilon \sin{\omega t}.$$
Let $K,L \in \C$ denote the unique solutions of the cubic equations
$$4K^3 - K + \frac{\omega}{2} = 0, \qquad 4L^3 - L + \omega = 0,$$
such that $K$ and $L$ belong to the boundary $\partial D_3$ of the domain $D_3 \subset \C$ defined by
$$D_3 = \{\im k < 0\} \cap \{\im(4k^3 - k) > 0\}$$
(see Figure 1).  Then, the first and second Neumann boundary values of $u$ at $x=0$ are as in (\ref{g1g2}), where, as $t \to \infty$, 
\begin{subequations}  \label{gformulas}
\begin{align}  \label{g11end}
g_{11}(t) = &  - K  e^{i\omega t} 
 -\bar{K}  e^{-i\omega t} + O\big(t^{-\frac{3}{2}}\big), 
 	\\\label{g21end}
g_{21}(t) = & \; 2iK^2 e^{i\omega t}- 2i \bar{K}^2 e^{-i\omega t} + O\big(t^{-\frac{3}{2}}\big), 
	\\ \label{g12end}
g_{12}(t) = &\; \frac{1}{8i}\bigg(\frac{L}{K^2} - \frac{2}{K}\bigg)e^{2i\omega t}
+\frac{1}{2} \im \bigg(\frac{1}{K}\bigg)
  - \frac{1}{8i} \bigg(\frac{ \bar{L}}{\bar{K}^2}- \frac{2}{\bar{K}}\bigg)e^{-2i\omega t} 	
 + O\big(t^{-\frac{3}{2}}\big),
 	\\   \label{g22end}
g_{22}(t) = & \;  \bigg(1 - \frac{L^2}{4K^2}\bigg) e^{2i\omega t} 
+ \re\bigg(\frac{K}{\bar{K}}\bigg) -1  
+ \bigg(1 - \frac{\bar{L}^2}{4\bar{K}^2}\bigg) e^{-2i\omega t}  + O\big(t^{-\frac{3}{2}}\big). 
\end{align}
\end{subequations}
\end{theorem}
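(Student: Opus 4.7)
The plan is to expand perturbatively and to treat each order via the Fokas integral representation underlying the proof of Theorem~\ref{linearth}. Substituting $u=\epsilon u_1+\epsilon^2 u_2+O(\epsilon^3)$ in \eqref{kdv} with $g_0=\epsilon\sin\omega t$ yields the linear hierarchy: $u_1$ solves \eqref{linearkdv} with $u_1(x,0)=0$ and $u_1(0,t)=\sin\omega t$, while $u_2$ solves the forced linear equation
\begin{align*}
u_{2,t}+u_{2,x}+u_{2,xxx} = -6\,u_1u_{1,x}, \qquad u_2(x,0)=0, \qquad u_2(0,t)=0.
\end{align*}
It therefore suffices to compute the asymptotic Neumann traces at $x=0$ at each order. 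In both cases, the tool is the Fokas representation of the solution as a contour integral on $\partial D_3$ together with the global relation that determines the unknown Neumann boundary values from the prescribed Dirichlet datum (and, at second order, from the bulk source).

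For the first-order problem I would write $\sin\omega t=(e^{i\omega t}-e^{-i\omega t})/(2i)$ and follow the proof of Theorem~\ref{linearth} to represent $g_{11}$ and $g_{21}$ as contour integrals along $\partial D_3$ with oscillatory phase $e^{i(4k^3-k)t}$ modulated by the temporal transform of $e^{\pm i\omega t}$. The relevant stationary points satisfy $4k^3-k=-\omega/2$; exactly one such root lies on $\partial D_3$, namely $K$ (its conjugate handles the $-\omega$ component). The non-degeneracy hypothesis $|\omega|\neq 2/(3\sqrt{3})$ is precisely the statement that the discriminant of this cubic is nonzero, so $K$ is a simple saddle, and the standard stationary-phase estimate delivers the principal terms of \eqref{g11end}--\eqref{g21end} with remainder $O(t^{-3/2})$. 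The explicit coefficients $-K$ and $2iK^2$ come from evaluating the algebraic factors that express the Neumann spectral data in terms of the Dirichlet spectral data at $k=K$.

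For the second-order problem I would insert the leading asymptotic form of $u_1$ into the bilinear nonlinearity to decompose
\begin{align*}
-6\,u_1u_{1,x} \;=\; f_0(x) + f_2(x)e^{2i\omega t} + \overline{f_2(x)}\,e^{-2i\omega t} + O(t^{-3/2}),
\end{align*}
with $f_0,f_2$ computable in terms of $K$, and then solve the three corresponding inhomogeneous linear problems with vanishing Dirichlet datum by applying the Fokas method with a bulk source. The $\pm2\omega$ harmonics are driven by saddle points satisfying $4k^3-k=-\omega$, so the root on $\partial D_3$ is $L$ (together with $\bar L$); the clause $|\omega|\neq 1/(3\sqrt{3})$ in \eqref{omegaassumption} guarantees that $L$ is a simple saddle, and stationary phase delivers the $e^{\pm2i\omega t}$ pieces of \eqref{g12end}--\eqref{g22end}. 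The zero-frequency forcing $f_0(x)$ does not admit a saddle-point reduction; its contribution must be extracted from the non-oscillatory limit of the contour integral, producing the time-independent terms $\tfrac12\im(1/K)$ and $\re(K/\bar K)-1$. Assembling the three pieces and using the cubics defining $K$ and $L$ to simplify the coefficients yields \eqref{gformulas}.

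The principal obstacle is the second-order analysis: one must propagate the $O(t^{-3/2})$ correction from $u_1$ through the bilinear nonlinearity and then through the bulk-to-boundary Fokas integral without loss of decay rate. This requires (i) sufficient spatial decay of $f_0,f_2$ to justify absolute convergence of the bulk spectral transforms, (ii) verification that no further saddles enter $\partial D_3$ at the doubled frequency $2\omega$ (which is exactly the role of the second clause in \eqref{omegaassumption}), and (iii) a correct identification of the non-oscillatory contribution of the zero-frequency sector, since the naive saddle-point method degenerates there. The remaining algebraic reduction of all coefficients to functions of $K$ and $L$ alone is then a routine computation using $4K^3-K+\omega/2=0$ and $4L^3-L+\omega=0$.
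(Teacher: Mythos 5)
Your outline diverges from the paper's argument in a way that conceals two genuine gaps. First, you misidentify the asymptotic mechanism at first order. In the Fokas representation the large-$t$ phase is $e^{-f(k)t}$ with $f(k)=2i(4k^3-k)$, whose saddle points are $\pm a=\pm 1/(2\sqrt{3})$, independent of $\omega$. The points $K$ and $L$ are \emph{not} stationary points: they are simple poles of the time-transforms $1/(f(k)+i\omega)$ and $1/(f(k)+2i\omega)$ of the periodic data, and the $O(1)$ periodic terms in \eqref{g11end}--\eqref{g22end} arise as residues picked up when the contour $\partial D_3$ is indented past them (e.g. $\res_{k=K}\,kf'(k)/(f(k)+i\omega)=K$). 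A ``stationary-phase estimate at the simple saddle $K$'' would produce neither an undamped periodic term nor an $O(t^{-3/2})$ remainder; the $t^{-3/2}$ rate comes from steepest descent at the fixed saddles $\pm a$, where the explicit factor $f'(k)$ in the integrand kills the leading $t^{-1}$ contributions from the two branches of $\Gamma$ (Proposition \ref{steepprop}). Your discriminant reading of \eqref{omegaassumption} is numerically equivalent to the correct statement ($K,L\neq\pm a$), but its purpose is to keep the poles off the saddles so that residue and saddle contributions separate, not to make $K$ a nondegenerate saddle.

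Second, and more seriously, your second-order step is not carried out and rests on a premise that fails. The paper never solves the forced equation $u_{2,t}+u_{2,x}+u_{2,xxx}=-6u_1u_{1,x}$; it expands the \emph{nonlinear integral equations for the Dirichlet-to-Neumann map} of \cite{LkdvD2N}, which express $g_{12}$ and $g_{22}$ through iterated time-integrals of the first-order boundary traces alone (the formulas \eqref{g11211222}--\eqref{hatchi12def}), so no spatial transform of the bulk source is ever needed. Your route requires $\int_0^\infty e^{2ikx}f_j(x)\,dx$ for the profiles of $-6u_1u_{1,x}$, but these profiles do not decay in $x$: for $|\omega|<1/(3\sqrt{3})$ the root $K$ is real, the asymptotic state of $u_1$ is a non-decaying oscillation in $x$, and your own prerequisite (i) fails. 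Moreover $u_1(x,t)$ approaches its periodic state only pointwise in $x$, not uniformly, so propagating the $O(t^{-3/2})$ error through the Duhamel term is precisely the hard part of the theorem --- in the paper this is the content of the estimates for $E$, $Q$, $R$ in Section \ref{F1F2subsec} and Appendix \ref{QRapp} --- and your proposal leaves it, together with the residue computations at $k=0,K,-\bar K$ that produce the constant terms $\tfrac12\im(1/K)$ and $\re(K/\bar K)-1$, entirely unaddressed.
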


Theorem \ref{mainth} reveals that, at least to second order in perturbation theory, the time-periodic Dirichlet profile $\epsilon \sin{\omega t}$ 
gives rise to time-periodic Neumann conditions. This suggests that the time-periodic Dirichlet data also generates a time-periodic solution in the nonlinear case. 

Results of this nature for the nonlinear problem, but with damping incorporated, are discussed in \cite{BSZ2002}.   That theory relies upon rather delicate 
Fourier analysis and is not as sharp as what is brought forth here.  

The situation for  mass transport is more complicated.

\begin{corollary}
Under the assumptions of Theorem \ref{mainth}, the mass function $M(t) = \int_0^\infty u(x,t) dx$ satisfies
\begin{align}\label{Mdifference}
  M(t + t_p) - M(t) = \epsilon m_1(t) + \epsilon^2 m_2(t) + O(\epsilon^3),
\end{align}
where $t_p = 2\pi /\omega$ and, as $t \to \infty$, 
\begin{subequations}\label{m1m2}
\begin{align}
& m_1(t) = O\big(t^{-\frac{3}{2}}\big),
	\\ \label{m1m2b}
& m_2(t) =  \frac{\pi}{\omega}\bigg[1 + 2\re\bigg(\frac{K}{\bar{K}}\bigg)\bigg] 
+ O\big(t^{-\frac{3}{2}}\big).
\end{align}
\end{subequations}
\end{corollary}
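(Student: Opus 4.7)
The plan is to derive an ODE for $M(t)$ by integrating the KdV equation in $x$, and then integrate that ODE over one period and substitute the asymptotics of the Neumann data provided by Theorem \ref{mainth}. Concretely, integrate
\begin{align*}
u_t + u_x + 6uu_x + u_{xxx} = 0
\end{align*}
with respect to $x$ from $0$ to $\infty$. Assuming that the sufficiently smooth solution $u(x,t)$ and its derivatives decay as $x \to \infty$ (a standing hypothesis consistent with the well-posedness theory referenced above), the fluxes at infinity vanish and one obtains
\begin{align*}
\frac{d M}{dt}(t) = g_0(t) + 3 g_0(t)^2 + g_2(t).
\end{align*}

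Integrating this identity from $t$ to $t + t_p$ yields
\begin{align*}
M(t+t_p) - M(t) = \int_t^{t+t_p}\bigl[g_0(s) + 3g_0(s)^2 + g_2(s)\bigr]\,ds.
\end{align*}
Next, I would plug in $g_0(s) = \epsilon \sin(\omega s)$ and the perturbative expansion $g_2 = \epsilon g_{21} + \epsilon^2 g_{22} + O(\epsilon^3)$, and sort contributions by powers of $\epsilon$. The $O(\epsilon)$ part equals $\int_t^{t+t_p}[\sin(\omega s) + g_{21}(s)]\,ds$. Since $\sin(\omega s)$ has period $t_p$ and mean zero, its integral vanishes. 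By \eqref{g21end}, $g_{21}(s)$ is the sum of $\pm 2iK^{2} e^{\pm i\omega s}$ (both of period $t_p$, so integrating to $0$) plus a remainder $O(s^{-3/2})$; since $t_p$ is fixed, integrating the remainder from $t$ to $t+t_p$ yields $O(t^{-3/2})$. Hence $m_1(t) = O(t^{-3/2})$.

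For the $O(\epsilon^2)$ part, I pick up two contributions. From the nonlinear term, $\int_t^{t+t_p} 3\sin^2(\omega s)\,ds = 3\pi/\omega$. From \eqref{g22end}, the oscillatory exponentials $e^{\pm 2i\omega s}$ integrate to zero, and the constant piece $\re(K/\bar K) - 1$ contributes $t_p\bigl(\re(K/\bar K) - 1\bigr) = (2\pi/\omega)(\re(K/\bar K) - 1)$; the remainder again yields $O(t^{-3/2})$. Summing,
\begin{align*}
m_2(t) = \frac{3\pi}{\omega} + \frac{2\pi}{\omega}\Bigl(\re\Bigl(\frac{K}{\bar K}\Bigr) - 1\Bigr) + O(t^{-3/2}) = \frac{\pi}{\omega}\Bigl[1 + 2\re\Bigl(\frac{K}{\bar K}\Bigr)\Bigr] + O(t^{-3/2}),
\end{align*}
which is \eqref{m1m2b}.

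The only real issue, beyond this bookkeeping, is justifying the boundary terms at $x = \infty$ in the integration by parts and controlling how the $O(s^{-3/2})$ error in Theorem \ref{mainth} transfers to an $O(t^{-3/2})$ error after integration. The former follows from the assumed smoothness/decay of the solution (and the perturbative ansatz, in which each $u_j$ inherits decay from the localized forcing history); the latter is trivial because the integration window has bounded length $t_p$ and $\int_t^{t+t_p} s^{-3/2}\,ds \sim t_p t^{-3/2}$ as $t \to \infty$. Thus no substantive obstacle stands in the way once Theorem \ref{mainth} is in hand; the corollary is essentially a direct consequence of mass balance at the boundary together with the computed boundary behavior.
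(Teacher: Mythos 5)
Your proposal is correct and follows essentially the same route as the paper: integrate the KdV equation in $x$ to get $M(t+t_p)-M(t)=\int_t^{t+t_p}[g_0+3g_0^2+g_2]\,ds$, expand in $\epsilon$, and substitute the asymptotics of $g_{21}$ and $g_{22}$ from Theorem \ref{mainth}, with the oscillatory terms averaging to zero and the constants producing $3\pi/\omega + (2\pi/\omega)(\re(K/\bar K)-1)$. The bookkeeping, including the transfer of the $O(s^{-3/2})$ error over the bounded window, matches the paper's argument.
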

\proofbegin
Calculate as follows:
\begin{align*}
M(t + t_p) - M(t) 
= & \int_0^\infty \int_t^{t+ t_p} u_t(x,s) ds dx
	\\
=& - \int_t^{t+ t_p}  \int_0^\infty  \big[u_x + 6 uu_x + u_{xxx}\big](x,s) dx ds
	\\
=& \int_t^{t+t_p}  \big[g_0(s) + 3 g_0^2(s) + g_2(s)\big] ds.
\end{align*}
It thus transpires that 
\begin{align*}
& m_1(t) = \int_t^{t+t_p}  \big[g_{01}(s) + g_{21}(s)\big] ds \quad {\rm and} 
	\\
& m_2(t) = \int_t^{t+t_p}  \big[3 g_{01}^2(s) + g_{22}(s)\big] ds.	
\end{align*}
Since $g_{01}(t) = \sin{\omega t}$, the expressions for $g_{21}$ and $g_{22}$ obtained in Theorem \ref{mainth} yield (\ref{m1m2}). 
\proofend

Equation (\ref{m1m2b}) implies that $m_2(t)$ does not approach zero as $t \to \infty$. Indeed, $\lim_{t \to \infty} m_2(t) = 0$ if and only if
$$\arg{K} = \pm \frac{\pi}{3} + \pi n \qquad {\rm for \; some} \;n \in \Z,$$
and this equation is never satisfies for $K \in \partial D_3$ and $K \neq 0$. 
This suggests that a periodic Dirichlet boundary condition does {\it not} in general give rise to an asymptotically periodic mass function $M(t) = \int_0^\infty u(x,t) dx$, 
although the discrepancy lies at second order.

\section{Proof of Theorem \ref{linearth}}\nequation\label{linearsec}
The first step is to derive an integral representation for the solution $u(x,t)$ of the  boundary value problem for the linear equation \eqref{linearkdv}.  

Define the open subsets $\{D_j\}_1^4$ of the complex $k$-plane by 
\begin{align*}
D_1 = \{\im k > 0\} \cap \{\im(4k^3 - k) > 0\},  \qquad
D_2 = \{\im k > 0\} \cap \{\im(4k^3 - k) < 0\}, 
	\\
D_3 = \{\im k < 0\} \cap \{\im(4k^3 - k) > 0\},  \qquad
D_4 = \{\im k < 0\} \cap \{\im(4k^3 - k) < 0\}.
\end{align*}	
Let $D_1 = D_1' \cup D_1''$ where $D_1' = D_1 \cap \{\text{Re}\, k > 0\}$ and $D_1'' = D_1 \cap \{\text{Re}\, k < 0\}$. Similarly, let $D_4 = D_4' \cup D_4''$ with $D_4' = D_4 \cap \{\text{Re}\, k > 0\}$ and $D_4'' = D_4 \cap \{\text{Re}\, k < 0\}$; see again Figure \ref{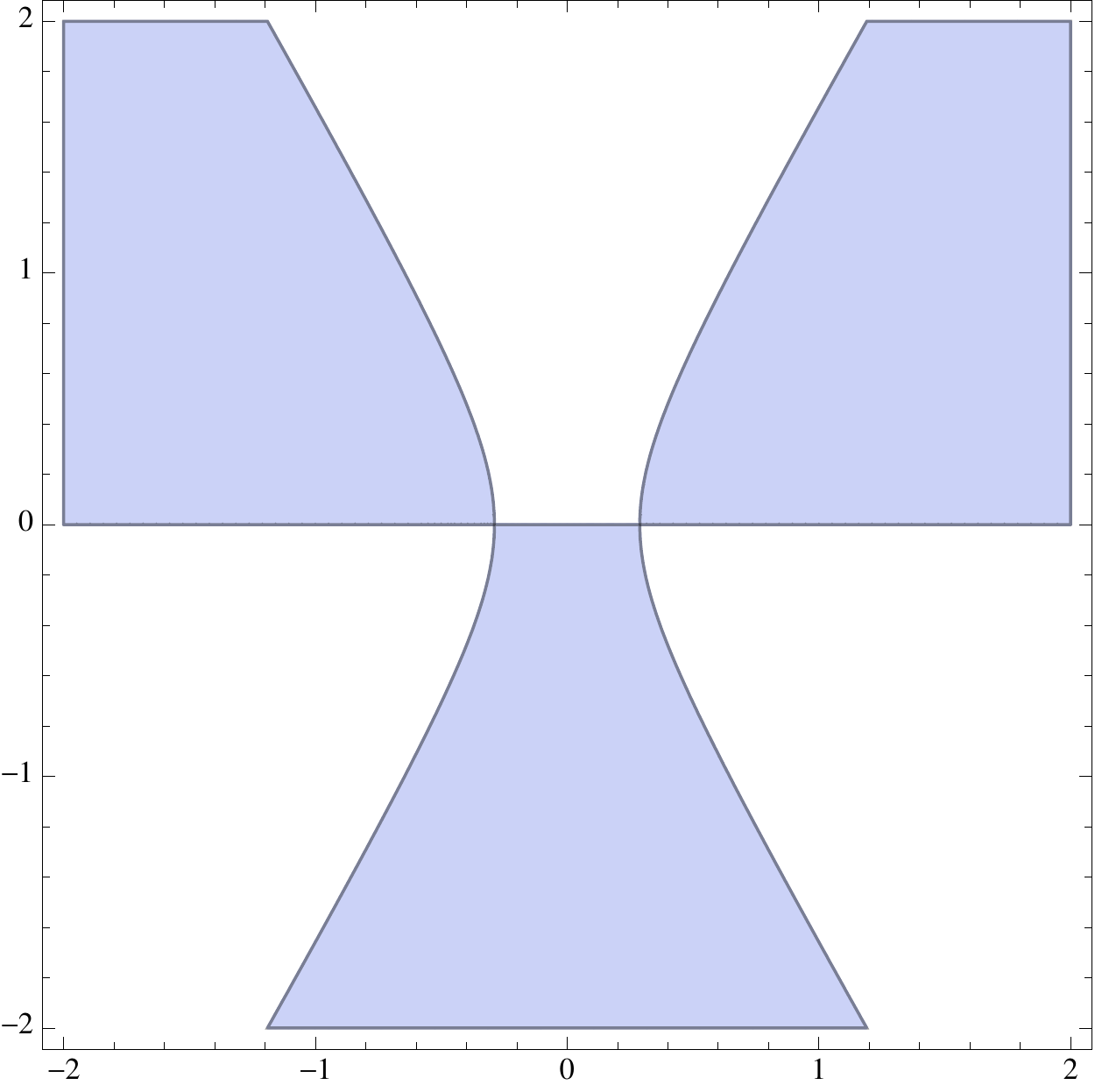}.
\begin{figure}
\begin{center}
\begin{overpic}[width=.5\textwidth]{DjsKdV.pdf}
      \put(77,67){$D_1'$}
      \put(48,75){$D_2$}
      \put(17,67){$D_1''$}
      \put(17,30){$D_4''$}
      \put(48,22){$D_3$}
      \put(77,30){$D_4'$}
      \end{overpic}
     \begin{figuretext}\label{DjsKdV.pdf}
       The domains $\{D_j\}_1^4$ in the complex $k$-plane with $D_1 = D_1' \cup D_1''$ and $D_4 = D_4' \cup D_4''$. 
     \end{figuretext}
     \end{center}
\end{figure}
For each $k \in \bar{D}_1 \cup \bar{D}_3$, the cubic polynomial 
$$4\nu^3 - \nu - (4k^3 - k)$$
vanishes at exactly one point in each of the three sets $\bar{D}_1'$, $\bar{D}_1''$, and $\bar{D}_3$. Denote these points by $\nu_1(k)$, $\nu_2(k)$, and $\nu_3(k)$, respectively. 

\begin{lemma}
The solution $u(x,t)$ for the initial-boundary-value problem  \eqref{auxiliary} for the linearized KdV equation (\ref{linearkdv}) has the representation 
\begin{align}\nonumber
 u(x,t)& = \frac{1}{\pi}\int_\R  e^{-2ikx-f(k) t} \hat{u}_0(k) dk
 + \frac{1}{2i\pi}\int_{\partial D_3}  e^{-2ikx-f(k)t} \bigg\{f'(k)\tilde{g}_0(f(k),t)
	\\ \label{linearsolution}
& - 2ik\frac{\hat{u}_0(\nu_2(k)) -\hat{u}_0(\nu_1(k))}{\nu_1(k) - \nu_2(k)}
 - 2i\frac{\nu_2(k) \hat{u}_0(\nu_1(k)) - \nu_1(k) \hat{u}_0(\nu_2(k))}{\nu_1(k) - \nu_2(k)}\bigg\}dk,
\end{align}
in terms of the initial data $u_0(x)$ and the Dirichlet data $g_0(t)$.  
Here, $f(k) = 2i(4k^3 - k)$ and
\begin{align*}
& \hat{u}_0(k) = \int_0^\infty e^{2ikx}u_0(x) dx, \qquad \im k \geq 0,
	\\
& \tilde{g}_0(\kappa,t) =  \int_0^t e^{\kappa s} g_0(s)ds, \hspace{1.2cm}  \kappa \in \C.
\end{align*}
\end{lemma}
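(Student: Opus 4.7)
The plan is to apply the Fokas unified transform method in three stages.

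First, take the spatial Fourier transform $\hat u(k,t) = \int_0^\infty e^{2ikx} u(x,t)\,dx$ of \eqref{linearkdv} and integrate by parts three times to move all derivatives off $u$. This produces boundary contributions from $g_0$, $g_1(t):=u_x(0,t)$, and $g_2(t):=u_{xx}(0,t)$, and yields the ODE
\[
\hat u_t(k,t) + f(k)\,\hat u(k,t) \;=\; (1-4k^2)\,g_0(t) - 2ik\, g_1(t) + g_2(t), \qquad \im k \geq 0,
\]
whose integrated form is the global relation
\[
e^{f(k)t}\hat u(k,t) - \hat u_0(k) \;=\; (1-4k^2)\tilde g_0(f(k),t) - 2ik\,\tilde g_1(f(k),t) + \tilde g_2(f(k),t).
\]
Fourier inversion then gives a preliminary representation of $u(x,t)$ as an integral over $\R$ involving $\hat u_0$ and all three of $\tilde g_0,\tilde g_1,\tilde g_2$. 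The boundary-data piece of this representation is entire in $k$, and Jordan-type decay estimates for $e^{-2ikx-f(k)t}$ in the lower half-plane permit its contour to be deformed from $\R$ down to $\partial D_3$.

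The crucial step is the elimination of the unknown Neumann data $\tilde g_1,\tilde g_2$, since only $g_0$ is prescribed. Here one exploits the invariance $f(\nu_j(k))=f(k)$ of the dispersion polynomial under the three roots $\nu_1(k),\nu_2(k),\nu_3(k)$ of $4\nu^3-\nu = 4k^3-k$. For $k\in\bar D_3$ the two points $\nu_1(k)\in\bar D_1'$ and $\nu_2(k)\in\bar D_1''$ lie in the closed upper half-plane where $\hat u$ is defined, so substituting them into the global relation produces two further linear equations relating $\tilde g_1(f(k),t),\tilde g_2(f(k),t)$ to $\tilde g_0(f(k),t)$, $\hat u_0(\nu_1(k))$, $\hat u_0(\nu_2(k))$, and the "unwanted" terms $e^{f(\nu_j(k))t}\hat u(\nu_j(k),t)$. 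These unwanted terms can be discarded: each is analytic in $k\in D_3$ with adequate decay at infinity, so after being multiplied by $e^{-2ikx-f(k)t}$ and integrated along $\partial D_3$ it vanishes by Cauchy's theorem. What remains is a $2\times 2$ linear system for $\tilde g_1,\tilde g_2$ whose determinant is a Vandermonde proportional to $\nu_1(k)-\nu_2(k)$. Solving this system by Cramer's rule and substituting back produces \eqref{linearsolution}; the elementary symmetric identities $\nu_1+\nu_2+\nu_3=0$ and $4(\nu_1\nu_2+\nu_2\nu_3+\nu_1\nu_3)=-1$ for the roots of the cubic collapse the coefficient of $\tilde g_0(f(k),t)$ into the clean factor $f'(k)$.

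The main obstacle is the analytic justification of the two contour deformations, namely the deformation of the $\R$-contour down to $\partial D_3$ and the discarding of the $\hat u(\nu_j(k),t)$ contributions. Both rest on controlling $\hat u(k,t)$ uniformly on closed subsets of $\bar D_1\cup\bar D_3$ together with the decay of $e^{-2ikx-f(k)t}$ along large arcs in the lower half-plane; this is precisely where the "sufficiently smooth" hypothesis on $u$ enters, since it provides the required decay in $k$ of $\hat u(k,t)$ needed to close contours at infinity. Once these analytic points are in place, the lemma reduces to the algebraic manipulation of the global relation described above.
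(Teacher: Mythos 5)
Your proposal is correct and follows essentially the same route as the paper: derive the global relation, invert, deform the boundary-data integral to $\partial D_3$, eliminate $\tilde g_1,\tilde g_2$ by evaluating the global relation at the roots $\nu_1(k),\nu_2(k)$, and discard the $\hat u(\nu_j(k),t)$ contributions by analyticity and Jordan's lemma. The only (immaterial) difference is that you obtain the global relation by Fourier transforming and integrating by parts, whereas the paper derives it from the one-form $W$ associated with the Lax pair via Stokes' theorem; the resulting identity and all subsequent algebra are the same.
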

\proofbegin
Equation (\ref{linearkdv}) is the compatibility condition of the Lax pair
\begin{align}\label{linearlax}
\begin{cases}
\varphi_x + 2ik \varphi = -\frac{i}{2k} u,
	\\
\varphi_t + f(k)\varphi = -2iku + u_x + \frac{i}{2k}(u + u_{xx}),
\end{cases}
\end{align}
where $k \in \C$ is the spectral parameter and $\varphi(x,t,k)$ is a scalar-valued eigenfunction. Write (\ref{linearlax}) in differential form as
$$d\bigl(e^{2ikx + f(k) t}\varphi\bigr)
= W,$$
where the closed one-form $W(x,t,k)$ is defined by
$$W = e^{2ikx + f(k)t}\biggl[-\frac{i}{2k}u \, dx + \Big(-2iku + u_x + \frac{i}{2k}(u + u_{xx})\Big)dt\biggr].$$
Stokes' Theorem implies that the integral of $W$ around the boundary of the domain $(0,\infty) \times (0,t)$ in the $(x,t)$-plane vanishes. This yields the global relation
\begin{align} \label{GR}
& \hat{u}_0(k) - e^{f(k)t}\hat{u}(k,t) + \tilde{g}(k,t) = 0, \qquad  k \in \bar{D}_1 \cup \bar{D}_2,
\end{align}
where 
\begin{align}\nonumber
& \tilde{g}(k,t) = (1- 4k^2) \tilde{g}_0(f(k),t) - 2ik\tilde{g}_1(f(k),t) + \tilde{g}_2(f(k),t),
	\\ \nonumber
& \hat{u}_0(k) = \int_0^\infty e^{2ikx}u_0(x) dx, \qquad
\hat{u}(k,t) = \int_0^\infty e^{2ikx} u(x,t) dx,
	\\\nonumber
& \tilde{g}_j(\kappa,t) =  \int_0^t e^{\kappa s} g_j(s)ds, \qquad j = 0,1,2.
\end{align}
Multiplying equation (\ref{GR}) by $\frac{1}{\pi}e^{-2ikx-f(k)t}$ and integrating the result along $\R$ with respect to $k$, it transpires that
\begin{align}\label{solutionformula}
 u(x,t) = &\; \frac{1}{\pi}\int_\R  e^{-2ikx-f(k)t} \hat{u}_0(k) dk
 - \frac{1}{\pi}\int_{\partial D_3}  e^{-2ikx-f(k) t} \tilde{g}(k,t)dk,
\end{align}
where Jordan's lemma has been used to deform the contour from $\R$ to $-\partial D_3$ in the second integral.

The final step consists of using the global relation to eliminate the two unknown functions $\tilde{g}_1(k,t)$ and $\tilde{g}_2(k,t)$ from (\ref{solutionformula}). Letting $k \to \nu_j(k)$, $j = 1,2$, in (\ref{GR}) gives
\begin{align}\nonumber
& \hat{u}_0(\nu_j(k)) - e^{f(k)t}\hat{u}(\nu_j(k),t)
+ (1- 4\nu_j^2(k)) \tilde{g}_0(f(k),t)  
	\\ \label{GRj}
&\hspace{1.54cm}- 2i\nu_j(k)\tilde{g}_1(f(k),t)+ \tilde{g}_2(f(k),t) = 0, \qquad k \in \bar{D}_3, \quad j = 1,2.
\end{align}
Solving these two equations for $\tilde{g}_1(f(k),t)$ and $\tilde{g}_2(f(k),t)$ leads to
\begin{align}\nonumber
\tilde{g}_1(f(k),t) = &\; \frac{i}{2\big(\nu_1(k) - \nu_2(k)\big)}\Bigl\{-\hat{u}_0(\nu_1(k))
+ \hat{u}_0(\nu_2(k)) 
	\\\nonumber
& + e^{f(k) t}[ \hat{u}(\nu_1(k), t) - \hat{u}(\nu_2(k),t)]
+ 4 \tilde{g}_0(f(k), t) (\nu_1^2(k) - \nu_2^2(k))\Bigr\},
	\\\nonumber
\tilde{g}_2(f(k),t) = &\; \frac{1}{\nu_1(k) - \nu_2(k)}\Bigl\{-\nu_1(k) \hat{u}_0(\nu_2(k))
+ \nu_2(k) \hat{u}_0(\nu_1(k)) 
	\\\nonumber
& + e^{f(k) t} [ \nu_1(k) \hat{u}(\nu_2(k), t)
- \nu_2(k)  \hat{u}(\nu_1(k),t) ]\Bigr\}
	\\\nonumber
& - \tilde{g}_0(f(k), t) (1 + 4\nu_1(k)\nu_2(k)).
\end{align}
Substituting these expressions into the solution formula (\ref{solutionformula}) and observing that, Jordan's lemma implies that the 
contributions from the terms involving $\{\hat{u}(\nu_j(k),t)\}_1^2$ vanish leads immediately to   (\ref{linearsolution}).
\proofend

Now suppose that $u_0(x) =0$ and that $g_0(t)$ is periodic with period $t_p$. 
To prove $(a)$, note that equation (\ref{linearsolution}) implies 
\begin{align*}
 u(x,t + t_p) - u(x,t) 
 = &\;  \frac{1}{2i\pi}\int_{\partial D_3} f'(k) e^{-2ikx-f(k)t} 
 \int_0^{t + t_p} e^{f(k) (s- t_p)} g_0(s)ds dk
 	\\
 & - \frac{1}{2i\pi}\int_{\partial D_3} f'(k)  e^{-2ikx-f(k)t} \int_0^t e^{f(k) s} g_0(s)ds dk.
\end{align*}
Making the change of variables $s \mapsto s + t_p$ in the part of the first $s$-integral that runs along $(t_p, t + t_p)$ and using the periodicity of $g_0$, 
there obtains
\begin{align*}
 u(x,t + t_p) - u(x,t) 
= &\; \frac{1}{2i\pi}\int_{\partial D_3} f'(k) e^{-2ikx-f(k)t} 
 \int_0^{t_p} e^{f(k)(s - t_p)} g_0(s)ds dk.
\end{align*}
Deforming the contour of integration from $\partial D_3$ to the steepest descent contour $\Gamma$, defined in Appendix \ref{steepestapp}, and see also  Figure 
\ref{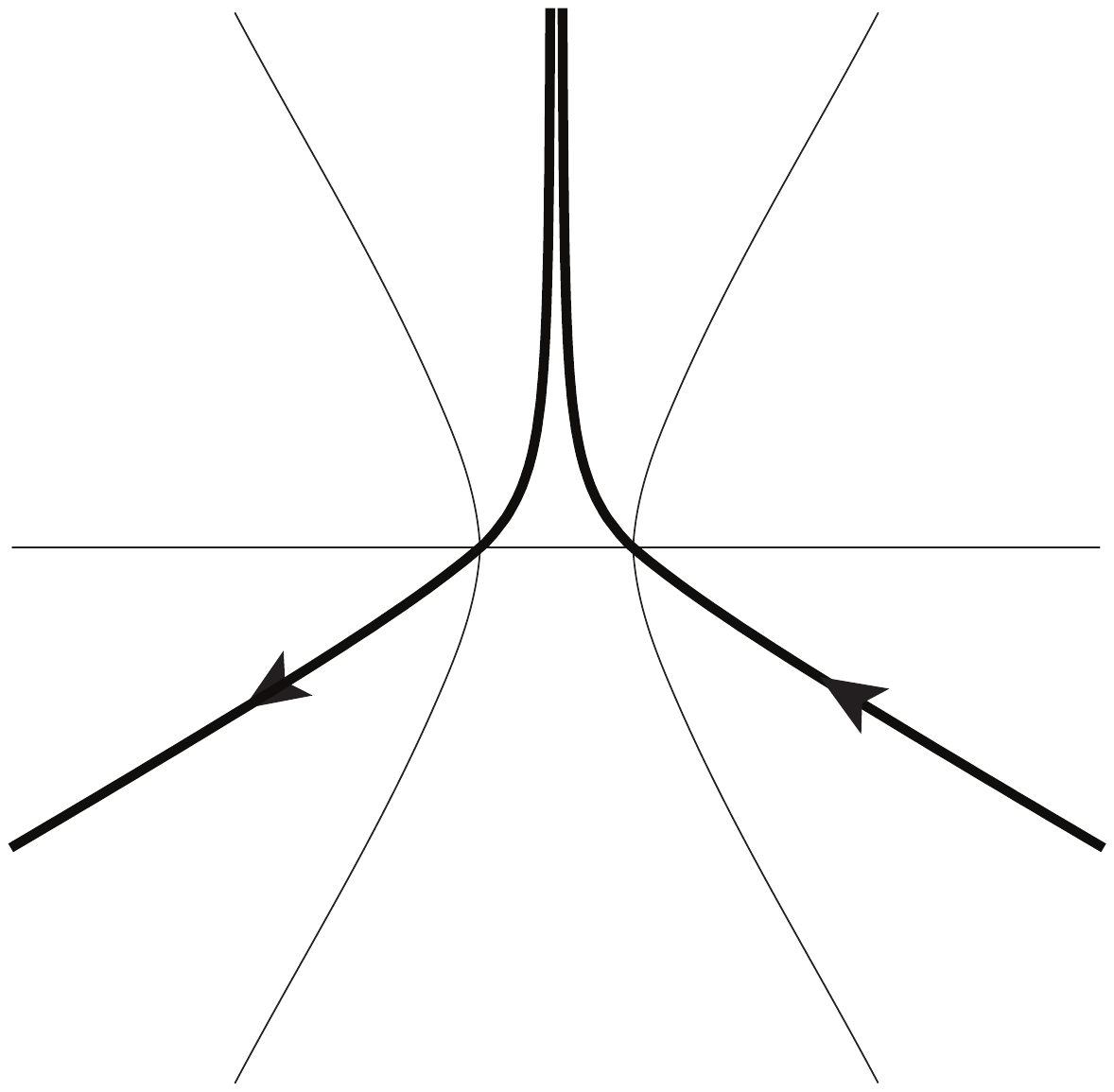}, a steepest descent argument yields (\ref{linearqdifference}) (see Proposition \ref{steepprop}).

To prove $(b)$,  define $m(x,t)$  by
\begin{align}\nonumber
m(x,t) := M(x,t + t_p) - M(x,t)
& = \int_t^{t + t_p} \partial_t M(x,t') dt'
	\\\nonumber
&   = - \int_t^{t + t_p} \int_x^\infty \big[u_x(x,t') + u_{xxx}(x,t') \big] dx dt'
  	\\ \label{Mdiff} 
& = \int_t^{t + t_p} \big[u(x,t') + u_{xx}(x,t')\big ] dt'.
\end{align}
Equation (\ref{linearsolution}) implies that
\begin{align*}\nonumber
u(x,t) = \frac{1}{2i\pi}  \int_{\partial \hat{D}_3} f'(k)  e^{-2ikx - f(k) t} \bigg(\frac{e^{f(k) t}}{f(k)} g_0(t)
- \int_0^{t} \frac{e^{f(k) s}}{f(k)} \dot{g}_0(s)ds\bigg) dk,
\end{align*}
where $\partial \hat{D}_3$ denotes the contour $\partial D_3$ deformed so that it passes to the right of the removable singularity at $k = 0$. 
\begin{figure}
\begin{center}
\begin{overpic}[width=.4\textwidth]{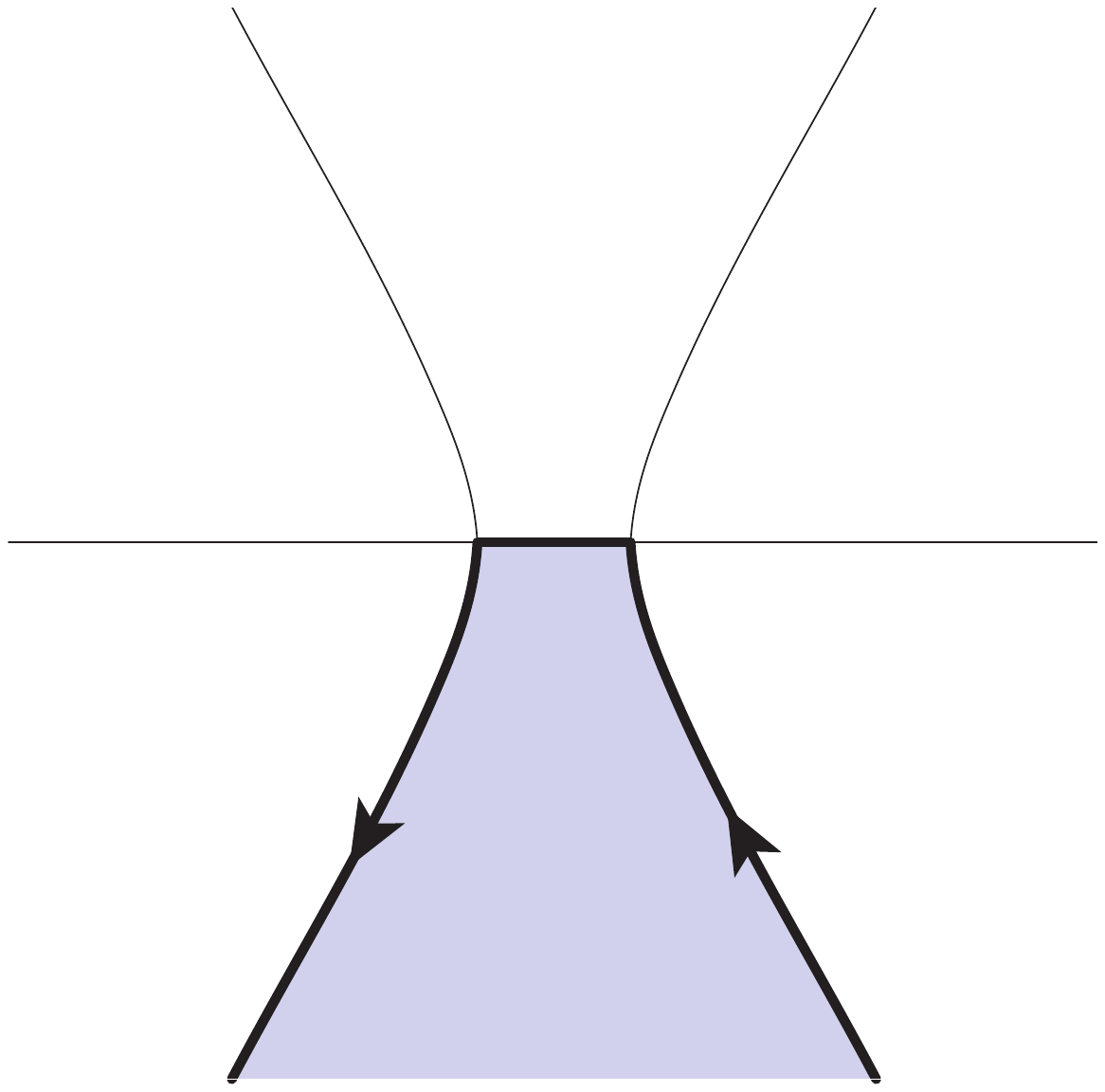}
       \put(71,23){$\partial D_3$}
      \put(47,12){$D_3$}
      \end{overpic}
      \qquad \quad
            \begin{overpic}[width=.4\textwidth]{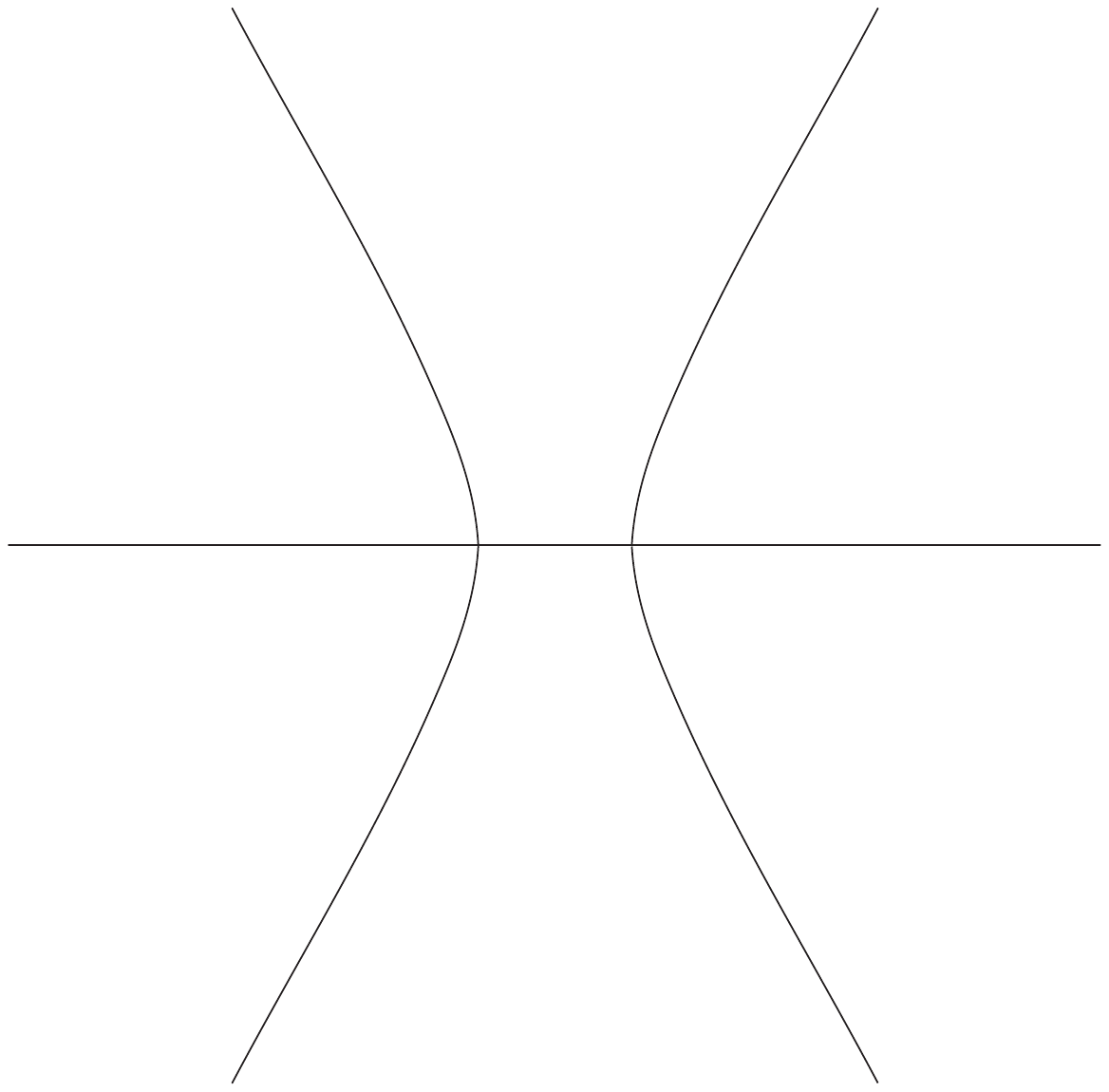}
      \put(72,64){$\re{f} < 0$}
      \put(38,80){$\re{f} > 0$}
      \put(5,64){$\re{f} < 0$}
      \put(5,30){$\re{f} > 0$}
      \put(38,15){$\re{f} < 0$}
      \put(72,30){$\re{f} > 0$}
      \end{overpic}
     \begin{figuretext}\label{D3hatfig}
       The contour $\partial D_3$ and the regions of definite sign of $\re f(k)$.
     \end{figuretext}
     \end{center}
\end{figure}
Since the contour has been deformed to $\partial \hat{D}_3$, the $k$-integral can be split and the part involving $g_0(t)$ can be calculated using Cauchy's theorem to reach the formula
\begin{align*}\nonumber
u(x,t) = &\;  \underset{k=0}{\res}  \frac{f'(k)  e^{-2ikx}}{f(k)} g_0(t)
- \frac{1}{2i\pi}  \int_{\partial \hat{D}_3} \frac{f'(k)}{f(k)}  e^{-2ikx - f(k) t}
 \int_0^{t} e^{f(k) s} \dot{g}_0(s)ds dk
 	\\
= &\; g_0(t) - \frac{1}{2i\pi}  \int_{\partial \hat{D}_3} \frac{f'(k)}{f(k)}  e^{-2ikx - f(k) t}
 \int_0^{t} e^{f(k) s} \dot{g}_0(s)ds dk.	
\end{align*}
Substituting this expression for $u(x,t)$ into (\ref{Mdiff}) gives
\begin{align}\nonumber
m(x,t) = & \int_t^{t + t_p} g_0(t') dt'
	\\ \label{mexpression1}
& - \frac{1}{4\pi}  \int_{\partial \hat{D}_3} \frac{f'(k)}{k}  e^{-2ikx} \bigg\{\int_t^{t + t_p}  e^{- f(k) t'}
 \int_0^{t'} e^{f(k) s} \dot{g}_0(s)ds dt' \bigg\}dk.
\end{align}
Integrating by parts with respect to $t'$ and using the periodicity of $g_0$, it is inferred that the curly bracket in (\ref{mexpression1}) equals
\begin{align*}\nonumber
& -\frac{e^{- f(k) (t + t_p)}}{f(k)} \int_0^{t + t_p} e^{f(k) s} \dot{g}_0(s)ds 
+ \frac{e^{- f(k) t}}{f(k)} \int_0^{t} e^{f(k) s} \dot{g}_0(s)ds 
+ \int_t^{t + t_p}  \frac{1}{f(k)} \dot{g}_0(t') dt'
 	\\
& \hspace{1cm}  = -\frac{e^{- f(k) (t + t_p)}}{f(k)} \int_0^{t_p} e^{f(k) s} \dot{g}_0(s)ds.
 \end{align*}
Hence, for any $x\geq 0$ and $t \geq 0$,
\begin{align}\nonumber
m(x,t) = & \int_t^{t + t_p} g_0(t') dt'
 + \frac{1}{4\pi}  \int_{\partial \hat{D}_3} \frac{f'(k)}{kf(k)}  e^{-2ikx} e^{- f(k) (t + t_p)} \int_0^{t_p} e^{f(k) s} \dot{g}_0(s)ds dk.
\end{align}
Deforming the contour of integration in the $k$-integral from $\partial \hat{D}_3$ to $\Gamma$, 
a steepest descent argument provides  (\ref{linearMdifference})
(see again Appendix A and Figure  \ref{Gamma.pdf}).

\section{Proof of Theorem \ref{mainth}}\nequation \label{nonlinearsec}
The proof relies on the  formulas\begin{subequations}\label{g11211222}
\begin{align}\label{g11start}
g_{11}(t) = &\; \frac{4}{\pi} \int_{\partial D_3} k\bigg[\hat{\chi}_{11}(t,k) + \frac{f'(k)}{f(k)}\frac{g_{01}(t)}{4}\bigg]dk,
	\\\label{g21start}
g_{21}(t) = &\;\frac{8}{\pi i}  \int_{\partial D_3} k^2 \bigg[\hat{\chi}_{11}(t,k) + \frac{f'(k)}{f(k)} \frac{g_{01}(t)}{4}\bigg]dk,
	\\\label{g12start}
g_{12}(t) = &\; \frac{4}{\pi} \int_{\partial D_3} k\bigg[\hat{\chi}_{12}(t,k) + \frac{f'(k)}{f(k)}\frac{g_{02}(t)}{4}\bigg]dk + \frac{2g_{01}(t)}{\pi} \int_{\partial \hat{D}_3} \chi_{21}(t,k) dk,
	\\ \nonumber
g_{22}(t) = &\;\frac{8}{\pi i}  \int_{\partial D_3} k^2\bigg[\hat{\chi}_{12}(t,k) + \frac{f'(k)}{f(k)} \frac{g_{02}(t)}{4}\bigg]dk 
	\\ \label{g22start}
& + \frac{2g_{11}(t)}{\pi} \int_{\partial \hat{D}_3} \chi_{21}(t,k)dk  - 2 g_{01}^2(t),
\end{align}
\end{subequations}
where $f(k) = 2i(4k^3 - k)$ as before and 
\begin{subequations}
\begin{align}\label{hatchi11}
 \hat{\chi}_{11}(t,k) = & -\frac{f'(k)}{4} \int_0^t e^{f(k)(t'-t)} g_{01}(t') dt',
	\\ \label{hatchi21def}
\chi_{21}(t,k) = &\; \frac{f'(k)}{2if(k)}\int_0^t (g_{01}(t') + g_{21}(t'))dt',
	\\\label{hatchi12def}
 \hat{\chi}_{12}(t,k) = & \sum_{m=1}^3 \nu_m(k) \nu_m'(k) \Phi_{12}(t, \nu_m(k)),
\end{align}
\end{subequations}
with
\begin{align*}	
\Phi_{12}(t,k) = &\int_0^t e^{f(k)(t'-t)} \bigg\{\frac{i}{2k}(g_{01} + g_{21})\Phi_{11} - 2ikg_{02} + g_{12} + \frac{i}{2k}(g_{02} + 2g_{01}^2 + g_{22})
	\\
& + \bigg(- 2ikg_{01} + g_{11} + \frac{i}{2k}(g_{01} + g_{21})\bigg)\Phi_{21}\bigg\} dt',
	\\
\Phi_{11}(t,k) = & \int_0^t e^{f(k)(t'-t)} \bigg(-2ikg_{01} + g_{11} + \frac{i}{2k}(g_{01} + g_{21})\bigg)dt',
	\\
\Phi_{21}(t,k) = & -\frac{i}{2k} \int_0^t (g_{01} + g_{21}) dt'.
\end{align*}	
These relations can be extracted from the nonlinear integral equations characterizing the Dirichlet to Neumann map of (\ref{kdv}) derived in \cite{LkdvD2N}.
The roots $\{\nu_j(k)\}_1^3$ satisfy the  identities
\begin{align*}
&\sum_{j=1}^3 \frac{1}{\nu_j(k)} = -\frac{2i}{f(k)}, \qquad
\sum_{j=1}^3 \frac{\nu_j'(k)}{\nu_j(k)} = \frac{f'(k)}{f(k)}, \qquad 
\sum_{j=1}^3 \nu_j^2(k)\nu_j'(k) = \frac{f'(k)}{8i},
	\\
&\sum_{j=1}^3 \nu_j(k) = \sum_{j=1}^3 \nu_j(k) \nu_j'(k)  = \sum_{j=1}^3 \nu_j^3(k)\nu_j'(k)= 0,
\qquad \sum_{j=1}^3 \nu_j^4(k)\nu_j'(k) = \frac{f'(k)}{32i}.
\end{align*}
In consequence, it transpires that 
\begin{align}\nonumber
\hat{\chi}_{12}(t,k) = &\; 
\frac{f'(k)}{4f(k)} \int_0^t e^{f(k)(t'-t)} (g_{01} + g_{21})(t') \int_0^{t'} (g_{01} + g_{21})(t'') dt'' dt'
	\\ \nonumber
& - \frac{f'(k)}{4f(k)} \int_0^t e^{f(k)(t'-t)} (g_{01} + g_{21})(t') \int_0^{t'} e^{f(k)(t''-t')}(g_{01} + g_{21})(t'') dt'' dt'
	\\ \nonumber
&- \frac{f'(k)}{4} \int_0^t e^{f(k)(t'-t)} g_{02}(t')  dt'
	\\ \nonumber
= &\; \frac{f'(k)}{4f(k)} \int_0^t e^{f(k)(t'-t)} (g_{01} + g_{21})(t') \bigg(\int_0^{t'} - \int_{t'}^t \bigg)(g_{01} + g_{21})(t'') dt'' dt'
	\\ \label{hatchi12start}
&- \frac{f'(k)}{4} \int_0^t e^{f(k)(t'-t)} g_{02}(t')  dt'.
\end{align}
Furthermore, since
$$\frac{f'(k)}{f(k)} = \frac{3}{k} + O(k^{-3}), \qquad {\rm as} \;\;k \to \infty,$$
Cauchy's theorem implies
\begin{align}\nonumber
\int_{\partial \hat{D}_3} \chi_{21}(t,k)dk
& = \bigg(-\frac{\pi}{2} + \pi \underset{k=0}{\res}\frac{f'(k)}{f(k)}\bigg)\int_0^t (g_{01}(t') + g_{21}(t'))dt'
	\\\label{intchi21}
& = \, \frac{\pi}{2}\int_0^t (g_{01}(t') + g_{21}(t'))dt'.
\end{align}

For each integer  $n \geq 1$, the third-order polynomial $f(k) + in\omega$ has three zeros; one zero in each of the three sets $\partial D_1'$, $\partial D_1''$ and $\partial D_3$.
Denote the unique solutions of $f(k) + in\omega = 0$ in $\partial D_3$ corresponding to $n = 1$ and $n = 2$ by $K$ and $L$, respectively. 

Several integrands will have singularities at points in the set $\{0, K, -\bar{K}, L, -\bar{L}\}$.  Let $\partial \hat{D}_3$ denote the contour $\partial D_3$ 
depicted in Figure \ref{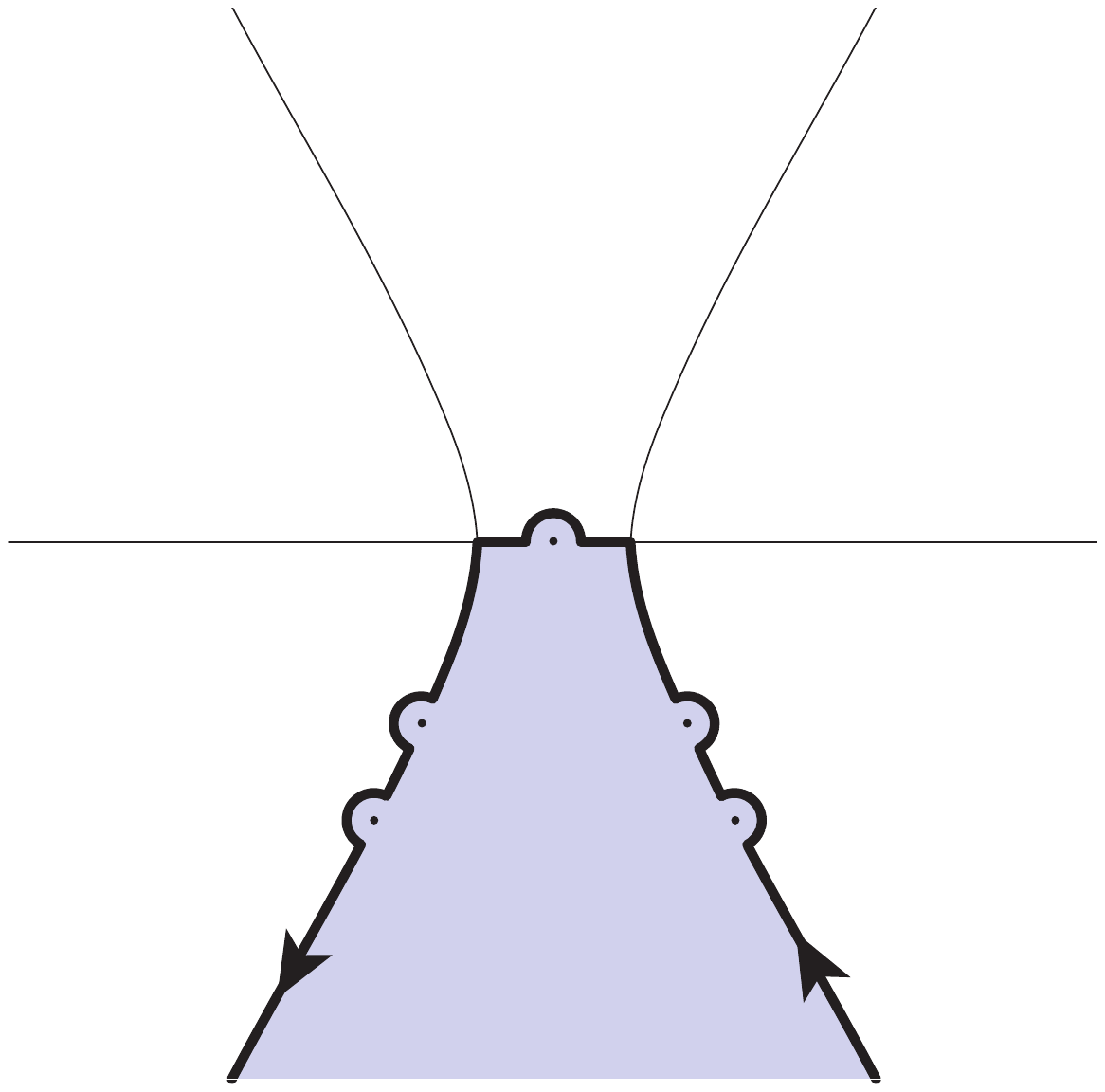} with indentations inserted so that $\partial \hat{D}_3$ passes to the right of the points $0, K, -\bar{K}, L$, and $-\bar{L}$, 
Assumption (\ref{omegaassumption}) implies that these indentations can be chosen to lie in $D_2 \cup D_4$.
Indeed, (\ref{omegaassumption}) implies that neither $K$ nor $L$ coincides with $\pm a \in \bar{D}_1 \cap \bar{D}_3$ where $a = \frac{1}{2\sqrt{3}}$. In fact, for $|\omega| < \frac{1}{3\sqrt{3}}$, both $K$ and $L$ belong to the interval $(-a, a)$.
For $\frac{1}{3\sqrt{3}} < |\omega| < \frac{2}{3\sqrt{3}}$, $K \in (-a,a)$ and $\im L < 0$. 
For $|\omega| > \frac{2}{3\sqrt{3}}$, $\im L < \im K < 0$. 
In particular, $K$ is a simple zero of $f(k) + i\omega = 0$ and $L$  is a simple zero of $f(k) + 2i\omega = 0$.
Similarly, $-\bar{K} \in \partial D_3$ is a simple zero of $f(k) - i\omega = 0$ and $-\bar{L} \in \partial D_3$   is a simple zero of $f(k) - 2i\omega = 0$.

\begin{figure}
\begin{center}
      \begin{overpic}[width=.5\textwidth]{D3hatKL.pdf}
      \put(77,11){$\partial \hat{D}_3$}
      \put(59,50){$a$}
      \put(34.5,50){$-a$}
      \put(55.5,29.5){$K$}
      \put(61,21){$L$}
      \put(39,29.5){$-\bar{K}$}
      \put(35,21){$-\bar{L}$}
      \put(49,43){$0$}
      \end{overpic}
     \begin{figuretext}\label{D3hatKL.pdf}
       The contour $\partial \hat{D}_3$ in the case of $\omega > \frac{2}{3\sqrt{3}}$.
     \end{figuretext}
     \end{center}
\end{figure}

The discussion continues with the particular choice $g_0(t) = \epsilon \sin(\omega t)$.   Each of the expressions in (\ref{g11211222}) is 
considered in turn with this choice of boundary data.  
\subsection{Asymptotics of $g_{11}(t)$}
Suppose $g_0(t) = \epsilon \sin(\omega t)$. Equation (\ref{hatchi11}) becomes
\begin{align}\nonumber
 \hat{\chi}_{11}(t,k) = & -\frac{f'(k)}{8i} \int_0^t e^{f(k)(t'-t)} (e^{i\omega t'} - e^{-i\omega t'})dt'
 	\\ \label{hatchi11computed}
 = & -\frac{f'(k)}{8i}\bigg[\frac{e^{i\omega t}}{f(k) + i\omega} - \frac{e^{-i\omega t}}{f(k) - i\omega}\bigg]
 + A_1(t,k),
\end{align}
where
\begin{align}\label{A1def}
A_1(t,k) = \frac{f'(k)}{8i} e^{-f(k)t} \bigg[\frac{1}{f(k) + i\omega} - \frac{1}{f(k) - i\omega}\bigg].
\end{align}
Substituting this into (\ref{g11start}) and using Cauchy's theorem gives the expression 
\begin{align}\nonumber
g_{11}(t) = &\; \frac{4}{\pi}  \int_{\partial \hat{D}_3} \frac{k}{8i} \bigg(-\frac{f'(k)}{f(k) + i\omega} + \frac{f'(k)}{f(k)}\bigg)dk e^{i\omega t} 
	\\ \nonumber
& + \frac{4}{\pi} \int_{\partial \hat{D}_3}  \frac{k}{8i} \bigg(\frac{f'(k)}{f(k) - i\omega} -  \frac{f'(k)}{f(k)}\bigg) dk e^{-i\omega t}
 + \frac{4}{\pi} \int_{\partial \hat{D}_3}  kA_1(t,k)dk
 	\\ \label{g11computed}
= &\;  - K  e^{i\omega t} 
 -\bar{K}  e^{-i\omega t}
 + \frac{4}{\pi}  \int_{\partial \hat{D}_3}  kA_1(t,k)dk,
\end{align}
where the formulas
$$\underset{k = K}{\res} \frac{kf'(k)}{f(k) + i\omega} = K \;\; \quad {\rm and} \quad \;\;
 \underset{k = -\bar{K}}{\res} \frac{kf'(k)}{f(k) - i\omega} = -\bar{K}
 $$
have been applied.  Using Jordan's lemma to deform the contour $\partial \hat{D}_3$ to the steepest descent contour $\Gamma$ 
depicted in  Figure \ref{Gamma.pdf}, Proposition \ref{steepprop} reveals that
\begin{align}\label{intkA1}
\int_{\partial \hat{D}_3}  kA_1(t,k)dk = O\big(t^{-\frac{3}{2}}\big), \qquad t\to \infty, 
\end{align}
thereby establishing  (\ref{g11end}).

\subsection{Asymptotics of $g_{21}(t)$}
Substituting (\ref{hatchi11computed}) into (\ref{g21start}) and using Cauchy's theorem provides the formula
\begin{align}\nonumber
g_{21}(t) = &\;\frac{8}{\pi i}  \int_{\partial \hat{D}_3}
\frac{k^2}{8i} \bigg[-\frac{f'(k)}{f(k) + i\omega} + \frac{f'(k)}{f(k)} \bigg]dke^{i\omega t}
	\\\nonumber
& + \frac{8}{\pi i}  \int_{\partial \hat{D}_3}\frac{k^2}{8i}\bigg[\frac{f'(k)}{f(k) - i\omega} - \frac{f'(k)}{f(k)}\bigg]dk e^{-i\omega t} 
+ \frac{8}{\pi i}  \int_{\partial \hat{D}_3}k^2A_1(t,k)dk
	\\ \label{g21computed}
= &\;2iK^2 e^{i\omega t}
- 2i \bar{K}^2 e^{-i\omega t} 
+ \frac{8}{\pi i}  \int_{\partial \hat{D}_3}k^2A_1(t,k)dk.	
\end{align}
when $g_0(t) = \epsilon \sin(\omega t)$.  Here, we have used that 
$$\underset{k = K}{\res} \frac{k^2f'(k)}{f(k) + i\omega} = K^2, \qquad
\underset{k = -\bar{K}}{\res} \frac{k^2f'(k)}{f(k) - i\omega} = \bar{K}^2.$$
Just as in the case of $g_{11}(t)$, a steepest descent argument shows  that
$$\int_{\partial \hat{D}_3}  k^2 A_1(t,k)dk = O\big(t^{-\frac{3}{2}}\big), \qquad t\to \infty, $$
which  proves (\ref{g21end}).

\subsection{Asymptotics of $g_{12}(t)$}
The computation of $g_{12}(t)$ relies on (\ref{g12start}) and proceeds via a series of lemmas.

\begin{lemma} The integral in the last term on the right-hand side of (\ref{g12start}) is given by
\begin{align} \label{chi21int}
& \int_{\partial \hat{D}_3} \chi_{21}(t,k) dk  = - \frac{\pi e^{i\omega t}}{8K}
- \frac{\pi e^{-i\omega t} }{8\bar{K}} 
+ \frac{\pi}{4} \re\Big(\frac{1}{K}\Big) - 4i \int_0^t  \int_{\partial \hat{D}_3}k^2A_1(t',k)dk dt'.
\end{align}
\end{lemma}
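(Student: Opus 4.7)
The starting point is the identity (\ref{intchi21}), which reduces the contour integral to a simple time integral:
$$\int_{\partial \hat{D}_3} \chi_{21}(t,k)\, dk = \frac{\pi}{2} \int_0^t \bigl(g_{01}(t') + g_{21}(t')\bigr) dt'.$$
Into this I would substitute $g_{01}(t') = \sin(\omega t') = \tfrac{1}{2i}(e^{i\omega t'} - e^{-i\omega t'})$ together with the already-derived expression (\ref{g21computed}) for $g_{21}(t')$,
$$g_{21}(t') = 2iK^2 e^{i\omega t'} - 2i\bar{K}^2 e^{-i\omega t'} + \frac{8}{\pi i} \int_{\partial \hat{D}_3} k^2 A_1(t',k)\, dk,$$
and integrate termwise in $t'$. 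Each oscillatory piece produces an exponential plus a constant, while the remainder integrates into the iterated integral that appears on the right-hand side of (\ref{chi21int}). The prefactor $\tfrac{\pi}{2}\cdot\tfrac{8}{\pi i} = -4i$ matches the stated coefficient.

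After collecting terms by time-dependence, the coefficients of $e^{\pm i\omega t}$ and of the constant piece take the shapes $(4K^{2}-1)/(2\omega)$, $(4\bar K^{2}-1)/(2\omega)$, and $(1-2K^{2}-2\bar K^{2})/\omega$ respectively. The key algebraic step is to simplify these using the cubic $4K^{3} - K + \omega/2 = 0$ from the statement of Theorem \ref{mainth} (and its conjugate $4\bar K^{3} - \bar K + \omega/2 = 0$). Dividing by $K$ gives $4K^{2} - 1 = -\omega/(2K)$, so $(4K^{2}-1)/(2\omega) = -1/(4K)$, and multiplying by $\pi/2$ produces $-\pi e^{i\omega t}/(8K)$; the $e^{-i\omega t}$ coefficient is handled identically. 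For the constant term, adding the two cubic relations yields
$$\frac{1-2(K^{2}+\bar K^{2})}{\omega} \;=\; \frac{1}{2}\bigl(\tfrac{1}{2K}+\tfrac{1}{2\bar K}\bigr) \;=\; \tfrac{1}{2}\re(1/K),$$
whose $\pi/2$ multiple is exactly $(\pi/4)\re(1/K)$.

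There is no real obstacle here: no contour deformations, no steepest-descent estimates, and no use of the structure of $A_{1}$ are needed at this stage. The whole content of the lemma is the combination of (\ref{intchi21}), the explicit second-order form of $g_{21}$ furnished by (\ref{g21computed}), an elementary indefinite integration, and the algebraic reductions dictated by the cubic satisfied by $K$. The only step where care is required is the bookkeeping needed to confirm that the three simplifications of the coefficients genuinely assemble into the stated right-hand side of (\ref{chi21int}).
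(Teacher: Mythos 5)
Your proposal is correct and follows essentially the same route as the paper: it substitutes the explicit form of $g_{01}+g_{21}$ obtained from (\ref{g21computed}) and the identity $1-4K^2=\omega/(2K)$ (equivalently, the cubic $4K^3-K+\omega/2=0$) into (\ref{intchi21}) and integrates termwise. The coefficient bookkeeping you describe, including the constant term $\frac{\pi}{4}\re(1/K)$ and the prefactor $\frac{\pi}{2}\cdot\frac{8}{\pi i}=-4i$, checks out.
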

\proofbegin
Equation (\ref{g21computed}) together with the identities
$$1 - 4K^2  = \frac{\omega}{2K}, \;\;\quad {\rm and} \;\;\quad1 - 4\bar{K}^2 = \frac{\omega}{2\bar{K}},$$
imply that
\begin{align}\label{g01plusg21}
g_{01}(t) + g_{21}(t) = \frac{\omega}{4iK}e^{i\omega t}
- \frac{\omega}{4i\bar{K}} e^{-i\omega t}
+ \frac{8}{\pi i}  \int_{\partial \hat{D}_3}k^2A_1(t,k)dk.
\end{align}
After substituting this expression into (\ref{intchi21}), formula (\ref{chi21int}) emerges.
\proofend

\begin{lemma} The function $\hat{\chi}_{12}(t,k)$  in (\ref{hatchi12def}) has the representation
\begin{align}\nonumber
\hat{\chi}_{12}(t,k) = &\;\frac{\omega f'(k)}{64if(k)} \bigg\{\frac{1}{K^2}\bigg(- \frac{2e^{2i\omega t}}{f(k) + 2i\omega}
 + \frac{e^{i\omega t} }{f(k) + i\omega}
 + \frac{e^{2i\omega t}}{f(k) + i\omega}\bigg)
	\\\label{hatchi12}
&  + \frac{1}{|K|^2}\bigg(\frac{e^{i\omega t}}{f(k) + i\omega} 
+ \frac{1}{f(k) + i\omega}
- \frac{e^{-i\omega t}}{f(k) - i\omega} 
- \frac{1}{f(k) - i\omega}\bigg)
	\\ \nonumber
& + \frac{1}{\bar{K}^2}\bigg(\frac{2e^{-2i\omega t}}{f(k) - 2i\omega} 
- \frac{e^{-i\omega t}}{f(k) - i\omega} 
- \frac{e^{-2i\omega t}}{f(k) - i\omega}\bigg)\bigg\}
+ E(t,k) + A_2(t,k),	
\end{align}
where $E(t,k)$ and $A_2(t,k)$ are given by
\begin{align}\nonumber
E(t,k) =  & \; \frac{f'(k)}{4f(k)} \int_0^t e^{f(k)(t'-t)} \bigg\{ \frac{8}{\pi i}  \int_{\partial \hat{D}_3}k^2A_1(t',k)dk\bigg\}\bigg\{-\frac{2e^{i\omega t'} - 1 - e^{i\omega t}}{4K}
	\\ \nonumber
& - \frac{2e^{-i\omega t'} - 1 - e^{-i\omega t}}{4\bar{K}}
 + \frac{8}{\pi i} \bigg(\int_0^{t'} - \int_{t'}^t \bigg) \int_{\partial \hat{D}_3}k^2A_1(t'',k)dkdt''\bigg\} dt'
	\\ \nonumber
&+ \frac{f'(k)}{4f(k)} \int_0^t e^{f(k)(t'-t)} \bigg\{\frac{\omega}{4iK} e^{i\omega t'}
- \frac{\omega}{4i\bar{K}} e^{-i\omega t'}\bigg\}
	\\ \label{Edef}
& \times \bigg\{\frac{8}{\pi i}  \bigg(\int_0^{t'} - \int_{t'}^t\bigg) \int_{\partial \hat{D}_3}k^2A_1(t'',k)dkdt''\bigg\} dt'
\end{align}
and
\begin{align}\nonumber
A_2(t,k) = &\; \frac{\omega f'(k)}{64if(k)} \bigg\{\frac{1}{K^2}\bigg(
 \frac{2e^{-f(k)t}}{f(k) + 2i\omega}
 - \frac{e^{-f(k)t}}{f(k) + i\omega}
 - \frac{e^{i\omega t - f(k)t}}{f(k) + i\omega}\bigg)
	\\\nonumber
&  + \frac{1}{|K|^2}\bigg(-\frac{e^{-f(k)t}}{f(k) + i\omega} 
- \frac{e^{-i\omega t - f(k)t}}{f(k) + i\omega}
+ \frac{e^{-f(k)t}}{f(k) - i\omega} 
+ \frac{e^{i\omega t -f(k)t}}{f(k) - i\omega}\bigg)
	\\\label{A2def}
& + \frac{1}{\bar{K}^2}\bigg(-\frac{2e^{-f(k)t}}{f(k) - 2i\omega} 
+ \frac{ e^{-f(k)t}}{f(k) - i\omega} 
+ \frac{e^{-i\omega t -f(k)t}}{f(k) - i\omega}\bigg)\bigg\}.
\end{align}
\end{lemma}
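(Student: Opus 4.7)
My plan is to start from the expression for $\hat{\chi}_{12}(t,k)$ in (\ref{hatchi12start}) and specialize to $g_{02}\equiv 0$, which holds here since $g_0(t)=\epsilon\sin(\omega t)$. This kills the last line of (\ref{hatchi12start}), leaving just the nested integral in $(g_{01}+g_{21})(t')(g_{01}+g_{21})(t'')$ with the weighted difference $\int_0^{t'}-\int_{t'}^t$ in the inner variable. The previous lemma has already provided the explicit decomposition (\ref{g01plusg21}), which I shall write as $g_{01}(t)+g_{21}(t)=h(t)+r(t)$ with $h(t)=\tfrac{\omega}{4iK}e^{i\omega t}-\tfrac{\omega}{4i\bar{K}}e^{-i\omega t}$ and $r(t)=\tfrac{8}{\pi i}\int_{\partial\hat{D}_3}k^2 A_1(t,k)\,dk$. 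Expanding the bilinear factor $(h+r)(t')\cdot(h+r)(t'')$ produces a pure $h\cdot h$ contribution together with three contributions involving at least one factor of $r$.

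For the $h\cdot h$ contribution, the elementary integration $(\int_0^{t'}-\int_{t'}^t)h(t'')\,dt''$ yields the closed-form expression $-\tfrac{2e^{i\omega t'}-1-e^{i\omega t}}{4K}-\tfrac{2e^{-i\omega t'}-1-e^{-i\omega t}}{4\bar{K}}$, which is the very pattern that reappears in the inner bracket of $E(t,k)$. Multiplying by $h(t')$ and integrating against $e^{f(k)(t'-t)}$ reduces the problem to a small finite collection of elementary integrals of the form $\int_0^t e^{f(k)(t'-t)+im\omega t'}\,dt'=\bigl(e^{im\omega t}-e^{-f(k)t}\bigr)/(f(k)+im\omega)$ with $m\in\{-2,-1,1,2\}$; the two would-be $m=0$ contributions arising from the cross products $e^{i\omega t'}\cdot e^{-i\omega t'}$ and $e^{-i\omega t'}\cdot e^{i\omega t'}$ cancel exactly, which explains the absence of a $1/f(k)$ term in (\ref{hatchi12}). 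The remaining primitives then split cleanly: the $e^{im\omega t}/(f(k)+im\omega)$ pieces assemble term by term into the large curly bracket of (\ref{hatchi12}) according to the coefficients $1/K^2$, $1/|K|^2$, $1/\bar{K}^2$, while the $-e^{-f(k)t}/(f(k)+im\omega)$ pieces arising from the lower limit $t'=0$ assemble into $A_2(t,k)$ as in (\ref{A2def}).

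The remaining three contributions $h\cdot r+r\cdot h+r\cdot r$ are handled by carrying out the nested difference $(\int_0^{t'}-\int_{t'}^t)$ without further simplification, producing precisely the two-piece structure displayed in (\ref{Edef}): the first piece collects every product in which $r$ appears inside the outer integrand, and the second isolates the cross product in which the main oscillatory part $h(t')$ is paired with an $r$-integral inside the bracket. The main obstacle, in my view, is the bookkeeping of the $h\cdot h$ expansion: four binomial products each produce several exponential primitives, and matching the resulting twelve terms to the specific $1/K^2$, $1/|K|^2$, $1/\bar{K}^2$ groupings requires care in tracking the sign conventions introduced by the difference $(\int_0^{t'}-\int_{t'}^t)$ and in verifying that the four cross terms proportional to $1/(K\bar{K})$ combine with the correct relative signs to produce the $1/|K|^2$ block as displayed.
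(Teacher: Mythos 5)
Your proposal is correct and follows essentially the same route as the paper: substitute the decomposition (\ref{g01plusg21}) into (\ref{hatchi12start}) (with $g_{02}\equiv 0$), use the closed form of $\bigl(\int_0^{t'}-\int_{t'}^{t}\bigr)(g_{01}+g_{21})(t'')\,dt''$, collect all terms containing the remainder $r$ into $E(t,k)$ exactly as grouped in (\ref{Edef}), and evaluate the remaining elementary $t'$-integrals so that the upper limits give the curly bracket of (\ref{hatchi12}) and the lower limits give $A_2(t,k)$. Your observation that the two $m=0$ cross terms cancel (so no $1/f(k)$ denominator appears beyond the overall prefactor) is consistent with the paper's intermediate expression, where the constant contributions $2-2$ in the $1/|K|^2$ block drop out.
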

\proofbegin
In view of (\ref{g01plusg21}), it transpires that 
\begin{align*}
\bigg(\int_0^{t'} - \int_{t'}^t\bigg)(g_{01}(t'') + g_{21}(t'')) dt'' =& -\frac{2e^{i\omega t'} - 1 - e^{i\omega t}}{4K}
- \frac{2e^{-i\omega t'} - 1 - e^{-i\omega t}}{4\bar{K}}
	\\
& + \frac{8}{\pi i} \bigg(\int_0^{t'} - \int_{t'}^t\bigg) \int_{\partial \hat{D}_3}k^2A_1(t'',k)dkdt''.
\end{align*}
Inserting this into the expression (\ref{hatchi12start}) for $\hat{\chi}_{12}(t,k)$ leads to 
\begin{align*}
\hat{\chi}_{12}(t,k) = &\; \frac{\omega f'(k)}{64if(k)} \int_0^t e^{f(k)(t'-t)} 
 \bigg\{- \frac{2e^{2i\omega t'} - e^{i\omega t'} - e^{i\omega (t+t')}}{K^2}
	\\
& 
 + \frac{ e^{i\omega t'} + e^{-i\omega (t-t')} - e^{-i\omega t'} - e^{i\omega (t-t')}}{|K|^2}
+ \frac{2e^{-2i\omega t'} - e^{-i\omega t'} - e^{-i\omega (t+t')}}{\bar{K}^2}\bigg\} dt'
	\\
& + E(t,k),
\end{align*}
where  $E(t,k)$ is as in (\ref{Edef}). Computing the integrals with respect to $t'$ gives (\ref{hatchi12}). 
\proofend

\begin{lemma}\label{khatchi12lemma} The first term on the right-hand side of (\ref{g12start}) is 
\begin{align}\nonumber
 \frac{4}{\pi}  \int_{\partial D_3} k\hat{\chi}_{12}(t,k) dk
= & \; \frac{1}{8i}\bigg(\frac{L}{K^2} - \frac{1}{K}\bigg) e^{2i\omega t} 
+ \frac{i}{4}\re\bigg(\frac{1}{K}\bigg) e^{i\omega t}+ \frac{1}{4}  \im \frac{1}{K}
	\\ \nonumber
& 
-\frac{i}{4}\re\bigg(\frac{1}{K}\bigg) e^{-i\omega t}
 - \frac{1}{8i}\bigg(\frac{\bar{L}}{\bar{K}^2} - \frac{1}{\bar{K}}\bigg) e^{-2i\omega t} 
	\\ \label{khatchi12int}
& + \frac{4}{\pi}  \int_{\partial \hat{D}_3} k[E(t,k) + A_2(t,k)]dk.
\end{align}
\end{lemma}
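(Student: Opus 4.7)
The plan is to substitute the explicit formula (\ref{hatchi12}) for $\hat{\chi}_{12}(t,k)$ into the left-hand side of (\ref{khatchi12int}) and evaluate the resulting contour integrals by Cauchy's residue theorem. The $E + A_2$ contribution carries over unchanged into the last term on the right-hand side of (\ref{khatchi12int}), so the work lies entirely in the ``main'' rational part of (\ref{hatchi12}): a finite sum of integrals of the form
\[
\frac{4}{\pi}\int_{\partial D_3} \frac{\omega\, k f'(k)}{64i\, f(k)\,(f(k)+ic\omega)}\,dk,
\qquad c \in \{\pm 1,\pm 2\},
\]
weighted by exponential time factors $e^{ni\omega t}$ and by combinatorial coefficients drawn from $1/K^2$, $1/|K|^2$, or $1/\bar{K}^2$.

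Each integrand has apparent singularities at $k=0$ (removable, since $kf'(k)/f(k)=(12k^2-1)/(4k^2-1)$ is regular at the origin), at $k=\pm 1/2$ (the other zeros of $f$), and at the zeros of $f(k)+ic\omega$. Among the three zeros of each $f(k)+ic\omega$, only $K$, $L$, $-\bar{K}$, $-\bar{L}$ (for $c=1,2,-1,-2$ respectively) lie on $\partial D_3$; the other two lie in $\bar{D}_1 \subset \{\im k \geq 0\}$. The points $\pm 1/2$ lie on the real axis outside the segment $[-a,a]$ and so are not in $D_3$ either. Consequently, closing $\partial \hat{D}_3$ by a large arc (on which the $O(k^{-3})$ decay of the integrand gives a vanishing contribution) produces a contour enclosing only $K$, $L$, $-\bar{K}$, $-\bar{L}$. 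The single residue identity
\[
\underset{k=k_0}{\res}\,\frac{kf'(k)}{f(k)\,(f(k)+ic\omega)} \;=\; \frac{k_0}{f(k_0)}\qquad\text{whenever } f(k_0)=-ic\omega,
\]
together with $f(K)=-i\omega$, $f(L)=-2i\omega$, $f(-\bar{K})=i\omega$, $f(-\bar{L})=2i\omega$, and the universal prefactor $(4/\pi)(\omega/(64i))(2\pi i)=\omega/8$, produces each residue contribution as an explicit simple fraction in $K$, $L$, $\bar{K}$, $\bar{L}$.

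Finally, I collect coefficients of $e^{2i\omega t}$, $e^{i\omega t}$, $1$, $e^{-i\omega t}$, $e^{-2i\omega t}$ according to the weights in (\ref{hatchi12}). For instance, the $e^{2i\omega t}$ coefficient combines the residue at $L$ (from $-2/(f+2i\omega)$ in the $1/K^2$ group) and at $K$ (from $1/(f+i\omega)$ in the $1/K^2$ group), telescoping to $\frac{1}{8i}(L/K^2 - 1/K)$; the time-independent coefficient combines residues at $K$ and $-\bar{K}$ in the $1/|K|^2$ group and simplifies via $\bar{K}/|K|^2 = 1/K$ to $\frac{1}{4}\im(1/K)$; the $e^{\pm i\omega t}$ coefficients follow from residues at $K$ and $-\bar{K}$ in the $1/K^2$ and $1/|K|^2$ groups using $1/K + 1/\bar{K} = 2\re(1/K)$; and the $e^{-2i\omega t}$ coefficient is dual to the $e^{2i\omega t}$ one under the symmetry $K \leftrightarrow -\bar{K}$, $L \leftrightarrow -\bar{L}$. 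The main obstacle is purely combinatorial: tracking the many signs, factors of $i$, and prefactors across the three weight classes and five exponential groups so that everything collapses against $1/f(k_0) = \mp 1/(ic\omega)$ to yield precisely the fractions appearing on the right-hand side of (\ref{khatchi12int}).
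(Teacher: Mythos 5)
Your proposal is correct and is essentially the paper's own argument: substitute the explicit formula (\ref{hatchi12}), carry the $E+A_2$ terms along unchanged, and evaluate the remaining rational integrals by residues at $K$, $L$, $-\bar{K}$, $-\bar{L}$ on $\partial\hat{D}_3$ (the paper's proof of Lemma \ref{khatchi12lemma} does exactly this, spelling out only the $e^{2i\omega t}$ coefficient as an example). Your residue identity, decay estimate, and sample coefficient computations all check out against the stated right-hand side.
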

\proofbegin
This is a consequence of the expression (\ref{hatchi12}) for $\hat{\chi}_{12}(t,k)$ and Cauchy's theorem. For example, (\ref{hatchi12}) implies that the coefficient of $e^{2i\omega t}$ is 
$$\frac{\omega}{16 i \pi K^2} \int_{\partial \hat{D}_3} \frac{k f'(k)}{f(k)} \bigg(\frac{1}{f(k) + i\omega} - \frac{2}{f(k) + 2i\omega}\bigg)dk.$$
Since the integrand has simple poles at $k = K$ and $k = L$, Cauchy's theorem implies that the latter integral equals
\begin{align*}
 \frac{\omega}{8 K^2} \Big(\underset{k=K}{\res} + \underset{k=L}{\res}\Big) \frac{k f'(k)}{f(k)} \bigg(\frac{1}{f(k) + i\omega} - \frac{2}{f(k) + 2i\omega}\bigg) = \frac{1}{8i}\bigg(\frac{L}{K^2} - \frac{1}{K}\bigg).
\end{align*}	
\proofend

Using (\ref{chi21int}) and (\ref{khatchi12int}) in the expression (\ref{g12start}) for $g_{12}(t)$, one finds that
\begin{align}\label{g12computed}
g_{12}(t) = &\; \frac{1}{8i}\bigg(\frac{L}{K^2} - \frac{2}{K}\bigg)e^{2i\omega t}
+\frac{1}{2}  \im\bigg( \frac{1}{K}\bigg)
- \frac{1}{8i} \bigg(\frac{ \bar{L}}{\bar{K}^2}- \frac{2}{\bar{K}}\bigg)e^{-2i\omega t} + F_1(t,k),
\end{align}
where
\begin{align}\label{F1def}
F_1(t,k) =  - 4\frac{e^{i\omega t} - e^{-i\omega t}}{\pi}\int_0^t  \int_{\partial \hat{D}_3}k^2A_1(t',k)dk dt'   \nonumber\\
+ \frac{4}{\pi}  \int_{\partial \hat{D}_3} k[E(t,k) + A_2(t,k)]dk.
\end{align}
To complete the proof of (\ref{g12end}), it is enough to show that $F_1(t,k)$ is $O(t^{-3/2})$ 
as $t \to \infty$.   The proof of this fact is postponed to Section \ref{F1F2subsec}.

\subsection{Asymptotics of $g_{22}(t)$}
The computation of $g_{22}(t)$ relies on (\ref{g22start}).

\begin{lemma} The last two terms on the right-hand side of (\ref{g22start}) can be written as
\begin{align}\nonumber
 &\frac{2g_{11}(t)}{\pi} \int_{\partial \hat{D}_3} \chi_{21}(t,k)dk  - 2 g_{01}^2(t) 
	\\\nonumber
  &  \hspace{.6cm}  = \; \frac{3}{4} e^{2i\omega t}
- \frac{1}{4}\bigg(1 + \frac{K}{\bar{K}} \bigg) e^{i\omega t}
+ \frac{1}{2}\re\bigg(\frac{K}{\bar{K}}\bigg)  -1
- \frac{1}{4}\bigg(1 + \frac{\bar{K}}{K} \bigg) e^{-i\omega t}
+ \frac{3}{4} e^{-2i\omega t}
	\\\nonumber
& \hspace{1.2cm} + \frac{8i}{\pi}\big( K  e^{i\omega t} + \bar{K}  e^{-i\omega t}\big) 
  \int_0^t  \int_{\partial \hat{D}_3}k^2A_1(t',k)dk dt'
	\\\nonumber
&  \hspace{1.2cm} + \frac{1}{\pi}  \bigg\{2 \re\bigg(\frac{1}{K}\bigg) - \frac{1}{K}e^{i\omega t}
- \frac{1}{\bar{K}} e^{-i\omega t} \bigg\} \int_{\partial \hat{D}_3}  kA_1(t,k)dk
	\\ \label{g22contribution1}
&  \hspace{1.2cm} - \frac{32i}{\pi^2} \bigg( \int_{\partial \hat{D}_3}  kA_1(t,k)dk\bigg)\bigg( \int_0^t  \int_{\partial \hat{D}_3}k^2A_1(t',k)dk dt'\bigg).
\end{align}
\end{lemma}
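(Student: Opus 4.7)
The plan is a direct algebraic expansion. Using (\ref{g11computed}), split
$$g_{11}(t) = P(t) + Q(t), \quad P(t) = -Ke^{i\omega t} - \bar{K}e^{-i\omega t}, \quad Q(t) = \tfrac{4}{\pi}\int_{\partial\hat{D}_3} kA_1(t,k)dk,$$
and using (\ref{chi21int}) write
$$\int_{\partial\hat{D}_3}\chi_{21}(t,k)dk = R(t) + S(t),$$
where $R(t) = -\tfrac{\pi e^{i\omega t}}{8K} - \tfrac{\pi e^{-i\omega t}}{8\bar{K}} + \tfrac{\pi}{4}\re(1/K)$ and $S(t) = -4i\int_0^t\int_{\partial\hat{D}_3}k^2 A_1(t',k)\,dk\,dt'$. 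The left-hand side of (\ref{g22contribution1}) then becomes $\tfrac{2}{\pi}[PR + PS + QR + QS] - 2g_{01}^2(t)$, and I will identify each piece with the corresponding term on the right-hand side.

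First I would dispatch the three cross contributions, which require no manipulation: $\tfrac{2}{\pi}PS$ reproduces the penultimate line involving $\tfrac{8i}{\pi}(Ke^{i\omega t} + \bar{K}e^{-i\omega t})\int_0^t\int k^2 A_1\,dk\,dt'$; $\tfrac{2}{\pi}QR$ factors the leading oscillatory part of $R$ against $\int kA_1\,dk$ to give the line containing $\tfrac{1}{\pi}\{2\re(1/K) - \tfrac{1}{K}e^{i\omega t} - \tfrac{1}{\bar{K}}e^{-i\omega t}\}$; and $\tfrac{2}{\pi}QS$ yields precisely the final double-integral term with the prefactor $-32i/\pi^2$.

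The substantive calculation is the leading piece $\tfrac{2}{\pi}PR - 2g_{01}^2(t)$. Expanding the product $PR$ produces coefficients $\pi/8$ at $e^{\pm 2i\omega t}$, a constant $\tfrac{\pi}{8}\bigl(\tfrac{K}{\bar{K}} + \tfrac{\bar{K}}{K}\bigr)$, and coefficients $-\tfrac{\pi K}{4}\re(1/K)$ and $-\tfrac{\pi\bar{K}}{4}\re(1/K)$ at $e^{i\omega t}$ and $e^{-i\omega t}$ respectively. The key nonobvious identity is
$$K\cdot 2\re\bigl(\tfrac{1}{K}\bigr) = K\cdot \frac{K + \bar{K}}{|K|^2} = 1 + \frac{K}{\bar{K}}$$
(together with its conjugate), which converts the $e^{\pm i\omega t}$ coefficients into the required $-\tfrac{1}{4}(1 + K/\bar{K})$ and $-\tfrac{1}{4}(1 + \bar{K}/K)$. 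The elementary identity $\tfrac{K}{\bar{K}} + \tfrac{\bar{K}}{K} = 2\re(K/\bar{K})$ puts the constant from $PR$ into the form $\tfrac{1}{2}\re(K/\bar{K})$.

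To finish, I would invoke the trigonometric expansion $-2g_{01}^2(t) = -2\sin^2(\omega t) = -1 + \tfrac{1}{2}(e^{2i\omega t} + e^{-2i\omega t})$, which promotes the coefficient of $e^{\pm 2i\omega t}$ from $\tfrac{1}{4}$ to $\tfrac{3}{4}$ and supplies the $-1$ appearing in the constant term. Assembling all five contributions yields (\ref{g22contribution1}) exactly. There is no analytic obstacle here — the entire argument is bookkeeping — and the only step that could go wrong is the algebraic identity $1 + K/\bar{K} = 2K\re(1/K)$ that matches the coefficients of $e^{\pm i\omega t}$, so I would verify it carefully before doing the expansion.
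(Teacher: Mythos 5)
Your proposal is correct and follows exactly the route the paper intends: substitute (\ref{g11computed}) and (\ref{chi21int}) into $\frac{2}{\pi}g_{11}\int\chi_{21}\,dk - 2g_{01}^2$ and expand, using $K\cdot 2\re(1/K)=1+K/\bar K$ and $-2\sin^2\omega t = -1+\tfrac12(e^{2i\omega t}+e^{-2i\omega t})$ to match coefficients. The paper states this as an immediate consequence of those two formulas without showing the bookkeeping; your expansion supplies precisely the omitted details and all terms check out.
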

\proofbegin
This follows from the expression (\ref{g11computed}) for $g_{11}(t)$, the expression (\ref{chi21int}) for $\int_{\partial \hat{D}_3} \chi_{21}(t,k)dk$, and the fact that $g_{01}(t) = \sin{\omega t}$.  
\proofend

\begin{lemma}The first term on the right-hand side of (\ref{g22start}) is given by
\begin{align}\nonumber
 \frac{8}{\pi i}  \int_{\partial D_3} k^2\hat{\chi}_{12}(t,k) dk
= & \; \frac{1}{4}\bigg(1 - \frac{L^2}{K^2}\bigg) e^{2i\omega t} 
+ \frac{1}{4}\bigg(1 + \frac{K}{\bar{K}}\bigg) e^{i\omega t}
	\\ \nonumber
& + \frac{1}{2}  \re\bigg(\frac{K}{\bar{K}}\bigg)
 + \frac{1}{4}\bigg(1 + \frac{\bar{K}}{K}\bigg) e^{-i\omega t}
 + \frac{1}{4}\bigg(1 - \frac{\bar{L}^2}{\bar{K}^2}\bigg) e^{-2i\omega t} 
	\\ \label{g22contribution2}
& + \frac{8}{\pi i}  \int_{\partial \hat{D}_3} k^2[E(t,k) + A_2(t,k)]dk.
\end{align}
\end{lemma}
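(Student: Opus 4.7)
The proof proceeds in direct parallel with that of Lemma~\ref{khatchi12lemma}. The plan is to substitute the explicit representation (\ref{hatchi12}) of $\hat\chi_{12}(t,k)$ into the left-hand side, deform the contour from $\partial D_3$ to the indented contour $\partial\hat D_3$ so as to avoid the relevant singularities, and extract the coefficient of each harmonic $e^{in\omega t}$ with $n\in\{-2,-1,0,1,2\}$ by the residue theorem. The pieces $E(t,k)$ and $A_2(t,k)$ are not evaluated at all; they are simply transported to the last term on the right-hand side of (\ref{g22contribution2}).

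The principal rational part of $\hat\chi_{12}(t,k)$ enters with the combined prefactor
\begin{equation*}
\frac{8}{\pi i}\cdot\frac{\omega f'(k)}{64\,i f(k)}=-\frac{\omega f'(k)}{8\pi f(k)},
\end{equation*}
so that each harmonic produces an integral of the form $-\tfrac{\omega}{8\pi}\int_{\partial\hat D_3}k^2 f'(k)/[f(k)(f(k)+in\omega)]\,dk$, weighted by a $K$-dependent constant read off from the $1/K^2$, $1/|K|^2$, or $1/\bar K^2$ line of (\ref{hatchi12}). The relevant simple poles on $\partial D_3$ are the zeros of $f(k)+in\omega$: these are $K,-\bar K$ for $n=\pm 1$ and $L,-\bar L$ for $n=\pm 2$. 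The identities $f(K)=-i\omega$, $f(L)=-2i\omega$, $f(-\bar K)=i\omega$, $f(-\bar L)=2i\omega$ deliver the residues at once, for instance
\begin{equation*}
\underset{k=K}{\res}\,\frac{k^2 f'(k)}{f(k)(f(k)+i\omega)}=\frac{K^2}{f(K)}=\frac{iK^2}{\omega},
\end{equation*}
and analogously at $L$, $-\bar K$, $-\bar L$. Crucially, the factor $k^2$ cancels the simple pole of $f'(k)/f(k)$ at $k=0$, so the indentation of $\partial\hat D_3$ at the origin contributes nothing. Cauchy's theorem then yields the five coefficients in (\ref{g22contribution2}) one at a time: the $e^{\pm 2i\omega t}$ coefficients combine residues at $K,L$ (respectively $-\bar K,-\bar L$); the $e^{\pm i\omega t}$ coefficients come from a single residue at $K$ (respectively $-\bar K$) but draw on both the $1/K^2$ and $1/|K|^2$ lines; and the constant term $\tfrac12\re(K/\bar K)$ emerges by combining the residues at $K$ and $-\bar K$ on the $1/|K|^2$ line, via $K/\bar K+\bar K/K=2\re(K/\bar K)$.

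The main obstacle is purely organizational: one must carefully track signs between the overall prefactor and the residues, distinguish between the $\pm i\omega$ and $\pm 2i\omega$ denominators throughout, and verify that the pairing of residues at the conjugate points $-\bar K,-\bar L$ with those at $K,L$ produces the complex-conjugate structure of the $e^{\pm i\omega t}$ and $e^{\pm 2i\omega t}$ coefficients together with the real constant $\tfrac12\re(K/\bar K)$. No new analytic input beyond residue calculus and the defining relations of $K$ and $L$ is needed; the argument is, conceptually, Lemma~\ref{khatchi12lemma} with $k$ replaced by $k^2$ and with the corresponding adjustment of the prefactor.
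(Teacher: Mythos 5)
Your proposal is correct and follows essentially the same route as the paper, which proves this lemma exactly as it proves Lemma \ref{khatchi12lemma}: substitute the representation (\ref{hatchi12}), collect the coefficient of each harmonic, evaluate the resulting contour integrals by Cauchy's theorem using the simple poles of $f(k)+in\omega$ at $K,L,-\bar{K},-\bar{L}$, and carry $E$ and $A_2$ along unevaluated. Your residue bookkeeping (including the cancellation of the pole at $k=0$ by the factor $k^2$ and the pairing of conjugate residues to produce $\tfrac12\re(K/\bar{K})$) reproduces the stated coefficients.
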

\proofbegin
As in the proof of Lemma \ref{khatchi12lemma}, this is a consequence of the expression (\ref{hatchi12}) for $\hat{\chi}_{12}(t,k)$ and Cauchy's theorem. 
\proofend

According to (\ref{g22start}), $g_{22}(t)$ is the sum of the expressions in (\ref{g22contribution1}) and (\ref{g22contribution2}). Thus,
\begin{align}\label{g22computed}
g_{22}(t) =  \bigg(1 - \frac{L^2}{4K^2}\bigg) e^{2i\omega t} 
+ \re\bigg(\frac{K}{\bar{K}}\bigg) -1  
+ \bigg(1 - \frac{\bar{L}^2}{4\bar{K}^2}\bigg) e^{-2i\omega t} + F_2(t,k),
\end{align}
where
\begin{align}\nonumber
F_2(t,k) = &\;  \frac{8i}{\pi}\big( K  e^{i\omega t} + \bar{K}  e^{-i\omega t}\big) 
  \int_0^t  \int_{\partial \hat{D}_3}k^2A_1(t',k)dk dt'
	\\\nonumber
& + \frac{1}{\pi}  \bigg\{2 \re\bigg(\frac{1}{K}\bigg) - \frac{1}{K}e^{i\omega t}
- \frac{1}{\bar{K}} e^{-i\omega t} \bigg\} \int_{\partial \hat{D}_3}  kA_1(t,k)dk
	\\\nonumber
&- \frac{32i}{\pi^2} \bigg( \int_{\partial \hat{D}_3}  kA_1(t,k)dk\bigg)\bigg( \int_0^t  \int_{\partial \hat{D}_3}k^2A_1(t',k)dk dt'\bigg)
	\\\label{F2def}
& + \frac{8}{\pi i}  \int_{\partial \hat{D}_3} k^2[E(t,k) + A_2(t,k)]dk.
\end{align}
The proof of (\ref{g22end}) is completed by establishing that $F_2(t,k)$ is $O(t^{-\frac{3}{2}})$ in the next subsection. 

\subsection{Asymptotics of $F_1(t,k)$ and $F_2(t,k)$}\label{F1F2subsec}
In this subsection,  the proof of Theorem \ref{mainth} is completed by showing that $F_j(t,k)$ is of order $O(t^{-3/2})$ as $t \to \infty$, 
$j= 1, 2$.

\begin{lemma} The functions $A_1(t,k)$ and $A_2(t,k)$ defined in (\ref{A1def}) and (\ref{A2def}) satisfy
\begin{align}  \label{intk2A1}
\int_0^t \int_{\partial \hat{D}_3} k^2 A_1(t',k) dk dt'
= \frac{\pi i}{2\omega} \re(K^2) + O\big(t^{-\frac{3}{2}}\big), \qquad t\to \infty,
\end{align}
and
\begin{align}\label{intkjA2}
\int_{\partial \hat{D}_3} k^j A_2(t,k) dk = O\big(t^{-\frac{3}{2}}\big), \hspace{1.7cm}  t\to \infty, \quad j = 1,2.
\end{align}
\end{lemma}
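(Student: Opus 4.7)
The plan is to first deform the contour $\partial\hat D_3$ to the steepest descent contour $\Gamma$ of Appendix \ref{steepestapp}; this deformation is permissible because the indentations defining $\partial\hat D_3$ were placed so that the simple poles of $A_1(t',\cdot)$ at $k=K$ and $k=-\bar K$ lie on the same side of $\partial\hat D_3$ as they do of $\Gamma$. Having deformed, apply Fubini to interchange the order of integration and evaluate the inner $t'$-integral via
\begin{equation*}
\int_0^t e^{-f(k)t'}\,dt' \;=\; \frac{1-e^{-f(k)t}}{f(k)},
\end{equation*}
which is valid on $\Gamma$ since $\re f(k)>0$ there off the saddle points. The double integral thus splits as $\int_\Gamma k^2\,C(k)\,dk - \int_\Gamma k^2\,C(k)\,e^{-f(k)t}\,dk$, where
\begin{equation*}
C(k):=\frac{f'(k)}{8if(k)}\Bigl[\frac{1}{f(k)+i\omega}-\frac{1}{f(k)-i\omega}\Bigr].
\end{equation*}

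\textbf{Residue and decay.} The $t$-dependent piece is $O(t^{-3/2})$ by Proposition \ref{steepprop}: its ordinary leading $O(t^{-1/2})$ saddle-point contribution drops out because $k^2 C(k)$ carries an $f'(k)$ numerator factor that vanishes at the saddles $k=\pm a=\pm 1/(2\sqrt{3})$. For the $t$-independent piece, reverse the deformation back to $\partial\hat D_3$ (crossing no poles) and compute by the residue theorem exactly as the leading terms $-Ke^{i\omega t}$ and $-\bar Ke^{-i\omega t}$ of $g_{11}$ were extracted in (\ref{g11computed}). The only residues contributing are $\mathrm{Res}_{k=K}\,k^2C(k)=K^2/(8\omega)$ (from the $1/(f(k)+i\omega)$ term) and $\mathrm{Res}_{k=-\bar K}\,k^2C(k)=\bar K^2/(8\omega)$ (from the $-1/(f(k)-i\omega)$ term); their sum multiplied by $2\pi i$ yields $\pi i\,\re(K^2)/(2\omega)$, establishing (\ref{intk2A1}).

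\textbf{Second estimate and main obstacle.} Each of the six terms of $A_2(t,k)$ displayed in (\ref{A2def}) is a product of a bounded, $k$-independent oscillatory prefactor (one of $1,\,e^{\pm i\omega t},\,e^{\pm 2i\omega t}$) with a rational function of $k$ of the form $\frac{f'(k)}{f(k)(f(k)\pm in\omega)}\,e^{-f(k)t}$ for $n\in\{1,2\}$, whose simple poles lie only in the finite set $\{0,\pm\tfrac12,K,-\bar K,L,-\bar L\}$. Under (\ref{omegaassumption}) the indentations of $\partial\hat D_3$ can be placed entirely in $D_2\cup D_4$ with $K,-\bar K,L,-\bar L$ lying on the same side as $\Gamma$, so each deformation $\partial\hat D_3\to\Gamma$ crosses no singularities; Proposition \ref{steepprop} then yields $O(t^{-3/2})$ for each $k$-integral (the $f'(k)$ factor again killing the leading $t^{-1/2}$ saddle contribution), and summing the finitely many terms gives (\ref{intkjA2}). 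The main delicate point throughout is this contour-deformation geometry, for which assumption (\ref{omegaassumption}) --- ensuring that $K,L$ remain distinct from the saddles $\pm a$ and are simple zeros of the relevant cubics --- is precisely what makes the indentations realisable in $D_2\cup D_4$ and the sharp $t^{-3/2}$ rate achievable without encountering a pole coalescing with a saddle.
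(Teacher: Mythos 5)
Your argument is correct and essentially identical to the paper's: both integrate out $t'$ via $\int_0^t e^{-f(k)t'}dt'=(1-e^{-f(k)t})/f(k)$, send the $e^{-f(k)t}$ piece to $\Gamma$ and invoke Proposition \ref{steepestapp}.1, and evaluate the $t$-independent piece by residues at $k=K$ and $k=-\bar K$, with the same values $K^2/(8\omega)$ and $\bar K^2/(8\omega)$; the only difference is the cosmetic one of deforming to $\Gamma$ before rather than after the $t'$-integration. The treatment of $A_2$ likewise matches the paper's one-line deformation-to-$\Gamma$ argument.
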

\begin{proof}
A computation using (\ref{A1def}), Proposition \ref{steepprop}, and Cauchy's theorem shows that the left-hand side of (\ref{intk2A1}) equals
\begin{align}\nonumber
  \int_{\partial \hat{D}_3} k^2 & \frac{f'(k)}{8i} \bigg[\frac{1}{f(k) + i\omega} - \frac{1}{f(k) - i\omega}\bigg] \int_0^t e^{-f(k)t'} dt' dk 
 	\\\nonumber
 = &\; -\frac{1}{8i} \int_{\partial \hat{D}_3} \frac{k^2f'(k)}{f(k)} \bigg[\frac{1}{f(k) + i\omega} - \frac{1}{f(k) - i\omega}\bigg] e^{-f(k)t}dk 
 	\\\nonumber
& \; \, + \frac{1}{8i} \int_{\partial \hat{D}_3} \frac{k^2f'(k)}{f(k)}  \bigg[\frac{1}{f(k) + i\omega} - \frac{1}{f(k) - i\omega}\bigg] dk
	\\\nonumber
 = &\; \; O\big(t^{-\frac{3}{2}}\big) + \frac{\pi}{4} \bigg\{\underset{k = K}{\res} \frac{k^2f'(k)}{f(k)}\frac{1}{f(k) + i\omega} - \underset{k = -\bar{K}}{\res}\frac{k^2f'(k)}{f(k)}\frac{1}{f(k) - i\omega}\bigg\}	
	\\\label{int0tk2A1int}
 = & \; \;\;\frac{\pi i}{4\omega} (K^2 + \bar{K}^2) + O\big(t^{-\frac{3}{2}}\big). 
\end{align}
This proves (\ref{intk2A1}). Recalling the definition (\ref{A2def}) of $A_2$ and deforming the contour to $\Gamma$, equation (\ref{intkjA2}) follows immediately from Proposition \ref{steepprop}.
\end{proof}

\begin{lemma} \label{Elemma}
The function $E(t,k)$ defined in (\ref{Edef}) satisfies
\begin{subequations}\label{intkE}
\begin{align}
&\frac{4}{\pi} \int_{\partial \hat{D}_3} k E(t,k) dk = \frac{2i}{\omega}\re(K^2) (e^{i\omega t} - e^{-i\omega t}) +  O\big(t^{-\frac{3}{2}}\big), \hspace{1.7cm}  t  \to \infty,
	\\
& \frac{8}{\pi i} \int_{\partial \hat{D}_3} k^2 E(t,k) dk
= \frac{4}{\omega} \re(K^2) (Ke^{i\omega t} + \bar{K} e^{-i\omega t}) + O\big(t^{-\frac{3}{2}}\big),  \qquad t  \to \infty.
\end{align}
\end{subequations}
\end{lemma}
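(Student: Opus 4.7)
The plan is to express $E(t,k)$ as a single $t'$-convolution, identify the one subterm that contributes at leading order, and control all other subterms via Proposition \ref{steepprop}.

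First, notice from (\ref{Edef}) that $E(t,k) = \tfrac{f'(k)}{4f(k)}\int_0^t e^{f(k)(t'-t)}\mathcal{G}(t,t')\,dt'$, where $\mathcal{G}=\mathcal{G}_1+\mathcal{G}_2+\mathcal{G}_3$ collects the three terms in (\ref{Edef}), built from $I_2(t''):=\int_{\partial\hat{D}_3}\ell^2 A_1(t'',\ell)\,d\ell$ and $\mathcal{J}(t,t'):=\bigl(\int_0^{t'}-\int_{t'}^t\bigr)I_2(t'')\,dt''$. Specifically, $\mathcal{G}_1=\tfrac{8}{\pi i}I_2(t')\cdot[\text{bounded bracket in }e^{\pm i\omega t},e^{\pm i\omega t'}]$, $\mathcal{G}_2=\bigl(\tfrac{8}{\pi i}\bigr)^2 I_2(t')\mathcal{J}(t,t')$, and $\mathcal{G}_3=\tfrac{8}{\pi i}\bigl[\tfrac{\omega}{4iK}e^{i\omega t'}-\tfrac{\omega}{4i\bar K}e^{-i\omega t'}\bigr]\mathcal{J}(t,t')$. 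Proposition \ref{steepprop} gives $|I_2(t'')|\le C(1+t'')^{-3/2}$, and (\ref{intk2A1}) then yields $\mathcal{J}(t,t')=\tfrac{\pi i}{2\omega}\re(K^2)+O((t')^{-3/2})+O(t^{-3/2})$.

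Swapping orders of integration produces $\int_{\partial\hat{D}_3}k^j E(t,k)\,dk=\int_0^t\mathcal{G}(t,t')H_j(t-t')\,dt'$ with kernel $H_j(s):=\int_{\partial\hat{D}_3}\tfrac{k^j f'(k)}{4f(k)}e^{-f(k)s}\,dk$, which Proposition \ref{steepprop} (after deforming to $\Gamma$) bounds by $C(1+s)^{-3/2}$. The leading-order contribution comes from replacing $\mathcal{J}$ by its constant limit in $\mathcal{G}_3$, which yields $\mathcal{G}_3^{\mathrm{lead}}(t')=-i\re(K^2)\bigl[\tfrac{e^{i\omega t'}}{K}-\tfrac{e^{-i\omega t'}}{\bar K}\bigr]$. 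Performing the $t'$-integral via $\int_0^t e^{f(k)(t'-t)}e^{\pm i\omega t'}dt'=\tfrac{e^{\pm i\omega t}-e^{-f(k)t}}{f(k)\pm i\omega}$, the $e^{\pm i\omega t}$ parts reduce to $k$-integrals already computed by Cauchy's theorem in (\ref{g11computed})--(\ref{g21computed}); the residues at $k=K$ and $k=-\bar{K}$ (the singularity at $k=0$ is removable, since $k^j f'(k)/f(k)$ is regular there for $j\ge 1$) assemble into the stated leading coefficients in (\ref{intkE}), while the $e^{-f(k)t}$ parts are $O(t^{-3/2})$ by another application of Proposition \ref{steepprop}.

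All three remainder contributions are $O(t^{-3/2})$. The $\mathcal{G}_1$ piece is a convolution of $I_2(t')=O((t')^{-3/2})$ against $H_j(t-t')=O((t-t')^{-3/2})$, for which the elementary estimate $\int_0^t(1+t')^{-3/2}(1+t-t')^{-3/2}dt'=O(t^{-3/2})$ (obtained by splitting at $t/2$) gives the desired bound. The bilinear $\mathcal{G}_2$ piece is handled the same way after observing that $\mathcal{J}(t,t')$ is uniformly bounded in $(t,t')$. The correction $\mathcal{J}(t,t')-\tfrac{\pi i}{2\omega}\re(K^2)$ attached to $\mathcal{G}_3$ is $O((t')^{-3/2})+O(t^{-3/2})$ and reduces to the previous cases.

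The main obstacle is tracking $\mathcal{J}(t,t')$ cleanly: the uniform bound near the endpoints $t'=0$ and $t'=t$ is essential to extract the full $(t')^{-3/2}\cdot(t-t')^{-3/2}$ gain in the bilinear term $\mathcal{G}_2$, and the approach to the constant limit must carry the full rate $(t')^{-3/2}+t^{-3/2}$ for the subtracted correction in $\mathcal{G}_3$ to be absorbed. Once these estimates are in place, the leading-order coefficients in (\ref{intkE}) follow from a direct residue computation using $f(K)=-i\omega$ and $f(-\bar K)=i\omega$.
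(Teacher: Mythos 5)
Your proposal is correct in substance but follows a genuinely different route from the paper for the error estimates. The paper evaluates all the $t'$-integrals in closed form, arriving at the explicit representation (\ref{Eexpression}) in which the remainder appears as double and triple contour integrals of the type $Q(t)$ and $R(t)$ in (\ref{Qdef})--(\ref{Rdef}); showing these are $O(t^{-3/2})$ then occupies Appendix B and requires delicate contour surgery (the contours $\partial \check{D}_3$ and $\partial \hat{D}_3(k,k'')$, the resonance point $K_r(k,k'')$, and the verification that denominators such as $f(k)-f(k')-f(k'')$ stay away from zero). You instead keep the $t'$-convolution structure, pull the $k$-integration inside to form the kernel $H_j(t-t')$, and bound everything by the elementary inequality $\int_0^t (1+t')^{-3/2}(1+t-t')^{-3/2}\,dt' = O(t^{-3/2})$; the resonance analysis is thereby bypassed entirely, since a vanishing denominator only ever manifests as a $t'$-integral that you never evaluate. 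Your extraction of the leading term (the constant limit of $\mathcal{J}$ inserted into $\mathcal{G}_3$) coincides exactly with the last line of (\ref{Eexpression}), and the residue evaluation matches the paper's. Two technical points should be repaired before this is airtight. First, the stated bound $|H_j(s)| \le C(1+s)^{-3/2}$ cannot hold near $s=0$: on $\partial \hat{D}_3$ the integrand $k^j f'(k)/(4f(k))\,e^{-f(k)s}$ grows like $k^{j-1}$ with $|e^{-f(k)s}|=1$, so $H_j$ must be defined through the deformation to $\Gamma$ (checking that the poles of $1/f$ at $k=\pm\tfrac12$ are not crossed, which they are not since the swept region stays off the real axis away from $\pm a$), and there $H_j(s)$ diverges like $s^{-j/3}$ as $s\to 0^+$; the correct bound is $|H_j(s)|\le C\min(s^{-j/3},s^{-3/2})$, which is still locally integrable and still yields the convolution estimate, but the endpoint $t'=t$ must then be paired with the boundedness of $\mathcal{G}$ there. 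Second, Proposition \ref{steepprop} is an asymptotic statement as $t\to\infty$, so the uniform bounds on $I_2(t')$ and $H_j(s)$ for intermediate arguments need the (easy) supplementary observation that these functions are continuous and bounded on compact sets bounded away from the singular endpoint. With these repairs your argument is a legitimate and appreciably shorter substitute for the paper's Appendix B treatment of this lemma.
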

\proofbegin
First note that the definition (\ref{A1def}) of $A_1$ and Cauchy's theorem yield
\begin{align*}
&\bigg(\int_0^{t'} - \int_{t'}^t\bigg) \int_{\partial \hat{D}_3}k^2A_1(t'',k)dkdt''
	\\
&= \frac{\pi i}{4\omega} (K^2 + \bar{K}^2) - \int_{\partial \hat{D}_3}k^2\frac{f'(k)}{8i} 
 \frac{2e^{-f(k)t'} - e^{-f(k)t}}{f(k)}\bigg[\frac{1}{f(k) + i\omega} - \frac{1}{f(k) - i\omega}\bigg]  dk.
\end{align*}
Substituting this and the expression (\ref{A1def}) for $A_1$ into the definition (\ref{Edef}) of $E(t,k)$ and performing the integrals with respect to $t'$ provides the formula
\begin{align}\nonumber
&E(t,k) = -\frac{f'(k)}{4\pi f(k)}  \int_{\partial \hat{D}_3} k'^2 f'(k') \bigg[\frac{1}{f(k') + i\omega} - \frac{1}{f(k') - i\omega}\bigg] 
	\\\nonumber
& \times \bigg\{-\frac{e^{i\omega t-f(k')t} - e^{-f(k)t}}{2K(f(k) - f(k') + i\omega)} 
+ \frac{e^{-f(k')t} - e^{-f(k)t}}{4K(f(k) - f(k'))} 
+ \frac{e^{i\omega t -f(k')t} - e^{i\omega t -f(k)t}}{4K(f(k) - f(k'))}
	\\\nonumber
& - \frac{e^{-i\omega t-f(k')t} - e^{-f(k)t}}{2\bar{K}(f(k)- f(k')-i\omega)} 
+ \frac{e^{-f(k')t} - e^{-f(k)t}}{4\bar{K}(f(k)-f(k'))} 
+ \frac{ e^{-i\omega t -f(k')t} - e^{-i\omega t - f(k)t}}{4\bar{K}(f(k) - f(k'))}
	\\\nonumber
& +4 \frac{e^{-f(k')t} - e^{-f(k)t}}{\omega(f(k) - f(k'))} \re(K^2)\bigg\} dk'
- \frac{f'(k)}{4f(k)}  \frac{1}{\pi^2}  \int_{\partial \hat{D}_3} k'^2 f'(k')
	\\\nonumber
&\times  \bigg[\frac{1}{f(k') + i\omega} - \frac{1}{f(k') - i\omega}\bigg] \int_{\partial \hat{D}_3} k''^2f'(k'')
\bigg[\frac{1}{f(k'') + i\omega} - \frac{1}{f(k'') - i\omega}\bigg]  
	\\\nonumber
&\times
\bigg\{ 2\frac{e^{-f(k'')t -f(k')t} - e^{-f(k)t}}{f(k'')(f(k) -f(k') - f(k''))} 
 - \frac{e^{-f(k'')t -f(k')t} - e^{-f(k'')t-f(k)t}}{f(k'')(f(k) - f(k'))} \bigg\} dk''dk'
	\\\nonumber
& + \frac{\omega f'(k)}{4\pi f(k)} \int_{\partial \hat{D}_3}  k'^2\frac{f'(k')}{f(k')}  \bigg[\frac{1}{f(k') + i\omega} - \frac{1}{f(k') - i\omega}\bigg]
	\\\nonumber
&\times
\bigg\{
\frac{e^{-f(k')t + i\omega t} - e^{-f(k)t}}{2iK(f(k) - f(k') + i\omega)}  
-\frac{e^{-f(k')t + i\omega t} - e^{-f(k')t-f(k)t}}{4iK(f(k) + i\omega)}  
	\\ \nonumber
& - \frac{e^{-f(k')t-i\omega t} - e^{-f(k)t} }{2i\bar{K}(f(k) - f(k') -i\omega)}
+ \frac{e^{-f(k')t-i\omega t} - e^{-f(k')t - f(k)t}}{4i\bar{K}(f(k) -i\omega)}
\bigg\}dk'
	\\\label{Eexpression}
&+ \re(K^2)  \frac{f'(k)}{f(k)}
 \bigg\{ \frac{e^{i\omega t} - e^{-f(k)t}}{4iK(f(k) + i\omega)} 
- \frac{e^{-i\omega t} - e^{-f(k)t}}{4i\bar{K}(f(k) -i\omega)} \bigg\}.
\end{align}
To prove (\ref{intkE}),  multiply (\ref{Eexpression}) by $k^j$, $j = 1,2$, and integrate over $\partial \hat{D}_3$. We claim that all the terms on 
the right-hand side of \eqref{Eexpression} which involve a factor of $e^{-f(k) t}$, $e^{-f(k') t}$, or $e^{-f(k'') t}$ are of order $O(t^{-3/2})$. 
Assuming for the moment that this 
is valid, it is inferred that the only $O(1)$ contribution to $\int_{\partial \hat{D}_3} k^j E(t,k) dk$, $j = 1,2$, derives from the last line of (\ref{Eexpression}), 
which is to say, 
\begin{align*}
\int_{\partial \hat{D}_3} k^j E(t,k) dk
= & \, \re(K^2) \int_{\partial \hat{D}_3} \frac{k^j f'(k)}{f(k)} 
 \bigg\{ \frac{e^{i\omega t}}{4iK(f(k) + i\omega)} 
- \frac{e^{-i\omega t}}{4i\bar{K}(f(k) -i\omega)} \bigg\} dk
	\\
& + O\big(t^{-\frac{3}{2}}\big), \qquad t \to \infty, \quad j = 1,2.
\end{align*}
Computing the integral using the residue theorem,  the asymptotic relations (\ref{intkE}) emerge.  

It remains to show that any term in (\ref{Eexpression}) involving $e^{-f(k) t}$, $e^{-f(k') t}$, or $e^{-f(k'') t}$ yields a contribution of order $O\big(t^{-\frac{3}{2}}\big)$. This follows from steepest descent considerations. In Appendix \ref{QRapp}, the details of this calculation  are provided for the case of the triple integral
\begin{align}\nonumber
Q(t) := & \int_{\partial \hat{D}_3} dk \frac{k^2f'(k)}{f(k)} 
 \int_{\partial \hat{D}_3}dk' \frac{k'^2 f'(k')}{f(k') + i\omega}
	\\ \label{Qdef}
&\times  \int_{\partial \hat{D}_3}dk '' \frac{k''^2f'(k'')}{f(k'')(f(k'') + i\omega)}\frac{e^{-f(k'')t -f(k')t} - e^{-f(k)t}}{f(k) -f(k') - f(k'')}
\end{align}
and the double integral
\begin{align}\label{Rdef}
R(t) := \int_{\partial \hat{D}_3} dk \frac{k^2f'(k)}{f(k)} \int_{\partial \hat{D}_3}dk' \frac{k'^2 f'(k')}{f(k') + i\omega}
\frac{e^{i\omega t-f(k')t} - e^{-f(k)t}}{f(k) - f(k') + i\omega}.
\end{align}
The other terms can be treated similarly. 
The proof of Lemma \ref{Elemma} is complete.   
\proofend

Using equations (\ref{intkA1}), (\ref{intk2A1}), (\ref{intkjA2}), and (\ref{intkE}) in the definitions (\ref{F1def}) and (\ref{F2def}) of $F_1(t,k)$ and  $F_2(t,k)$, 
respectively, there appears  
\begin{align}
F_j(t,k) = O\big(t^{-\frac{3}{2}}\big), \qquad t \to \infty, \quad j = 1,2,
\end{align}
which completes the proof of Theorem \ref{mainth}.  \proofend

\section{Conclusion}\nequation

Two questions about solutions of a natural initial-boundary-value problem for the Korteweg-de Vries equation have been addressed here.  Both these 
questions arise naturally from observed experimental data obtained in water tank experiments.  Overall, these results, which are asymptotic, but exact 
in their large-time structure, raise a cautionary note.  While the positive result of asymptotic periodicity corresponds well to what is seen in 
experiments, the lack of asymptotically conserved mass is troubling.  Admittedly, this does not occur at first order, $O(\epsilon)$ in the notation
in force here, but rather at the second order $O(\epsilon^2)$.   As the Korteweg-de Vries model is only an accurate approximation on the 
so-called Boussinesq time
scale of order $O(1/\epsilon)$, the fact that mass is not conserved at the higher order does not make the initial-boundary-value problem 
considered here necessarily suspect.   However, it does reinforce the view that the model should not be pushed beyond the Boussinesq 
time scale.  If a unidirectional initial-boundary-value problem valid on a longer time scale is needed, a higher-order correct model should
be employed.  A recent example of such a model is provided in \cite{BCPS} (and see the references therein to other, related models)
 and a relevant initial-boundary-value problem was put forward and
analysed in \cite{chen}.  However, this latter problem features a piece of boundary data that might be hard to obtain in a laboratory setting.
   This issue deserves 
further study.

\appendix
\section{Steepest descent lemma} \label{steepestapp}
\renewcommand{\theequation}{A.\arabic{equation}}\nequation
In this appendix, the method of steepest descent is used  to determine the large $t$ behavior of certain integrals involving the exponential $\exp(-f(k) t)$. 

Let $a = 1/(2\sqrt{3})$ as before. The function $f'(k) = 2i(12k^2 - 1)$ vanishes at the two critical points $\pm a$. Let $\Gamma$ 
be the steepest descent contour shown in Figure \ref{Gamma.pdf}.  The contour  $\Gamma$ is characterized by the condition that $\im f(k) = \im f(a)$ on the part of $\Gamma$ passing through $k = a$, while $\im f(k) = \im f(-a)$ on the part of $\Gamma$ passing through $k = -a$.
For $j = 1,\dots, 4$,  let $\Gamma_j$ denote the part of $\Gamma$ that lies in the $j$'th quadrant. Then $\re f(k)$ is strictly increasing from $0$ to $+\infty$ as $k$ moves away from $\pm a$ towards $\infty$ along any of the $\Gamma_j$'s.

\begin{figure}
\begin{center}
\begin{overpic}[width=.5\textwidth]{Gamma.pdf}
      \put(80,39){$\Gamma$}
      \put(59,51.5){$a$}
      \put(34,51.5){$-a$}
      \end{overpic}
     \begin{figuretext}\label{Gamma.pdf}
       The steepest descent contour $\Gamma$ passing through the critical points $\pm a$.
     \end{figuretext}
     \end{center}
\end{figure}

\begin{proposition}\label{steepprop}
Let $q(k)$   be a function which is analytic in a neighborhood of $\Gamma$. Suppose that $q(k)$ grows at most algebraically as $k \to \infty$. It follows that 
\begin{align}\label{Gammaint}
\int_{\Gamma} q(k) f'(k) e^{-f(k) t} dk = O\big(t^{-\frac{3}{2}}\big), \qquad t \to \infty.
\end{align}
\end{proposition}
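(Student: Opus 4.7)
The plan is to apply the method of steepest descent, exploiting two features: the factor $f'(k)$ in the integrand vanishes precisely at the saddle points $\pm a$, and the two sub-branches of $\Gamma$ meeting at each saddle contribute cancelling leading terms.

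First I would split $\Gamma = \Gamma_+ \cup \Gamma_-$ into the two connected pieces passing through $+a$ and $-a$, respectively, and treat them separately. On $\Gamma_\pm$, the defining property $\im f(k) = \im f(\pm a)$ makes $s := f(k) - f(\pm a)$ real, while the steepest-descent (rather than steepest-ascent) character forces $s \geq 0$, with $s = 0$ attained only at the saddle. Since $f''(\pm a) = \pm 48ia \neq 0$, the implicit function theorem applied to $f(k) - f(\pm a) = s$ shows that each $\Gamma_\pm$ consists of two smooth sub-branches $\Gamma_\pm^{(1)}$, $\Gamma_\pm^{(2)}$ joined at $\pm a$, each parameterized by $s \in [0,\infty)$ via a function $k_\pm^{(j)}(s)$ with the expansion $k_\pm^{(j)}(s) = \pm a + (-1)^{j+1} c_\pm \sqrt{s} + O(s)$ for some nonzero constant $c_\pm$.

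Performing the change of variables $s = f(k) - f(\pm a)$, so that $ds = f'(k) dk$, on each sub-branch and keeping track of orientation (so that the incoming branch is traversed with $s$ decreasing from $\infty$ to $0$ and the outgoing branch with $s$ increasing from $0$ to $\infty$), the contribution of $\Gamma_\pm$ collapses to
\begin{equation*}
\int_{\Gamma_\pm} q(k) f'(k) e^{-f(k) t} dk
= e^{-f(\pm a) t} \int_0^\infty e^{-st} \bigl[q(k_\pm^{(1)}(s)) - q(k_\pm^{(2)}(s))\bigr] ds.
\end{equation*}
Since $f(\pm a) \in i\R$, one has $|e^{-f(\pm a) t}| = 1$, and the task reduces to estimating this one-dimensional Laplace integral. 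The analyticity of $q$ near $\pm a$, combined with the expansion of $k_\pm^{(j)}(s)$ above, yields $q(k_\pm^{(1)}(s)) - q(k_\pm^{(2)}(s)) = O(\sqrt{s})$ as $s \to 0^+$, which is the decisive cancellation: it converts the naive $O(1/t)$ Watson-lemma contribution into the claimed $O(t^{-3/2})$.

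Only the behavior at large $s$ remains to be controlled. As $s \to \infty$ along $\Gamma_\pm$, $|k(s)|$ grows like $s^{1/3}$, so the algebraic growth of $q$ transfers to polynomial growth in $s$ and is absorbed by $e^{-st}$ with exponentially small error. The main point of care is therefore not a deep estimate but the orientation bookkeeping at the saddles that produces the minus sign between the two sub-branches; once that is in place, Watson's lemma applied to $q(k_\pm^{(1)}(s)) - q(k_\pm^{(2)}(s))$, which vanishes to order $s^{1/2}$ at the origin, delivers the bound \eqref{Gammaint}.
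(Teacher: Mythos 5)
Your proposal is correct and follows essentially the same route as the paper: split $\Gamma$ into the branches through the saddle points $\pm a$, change variables $s = f(k)-f(\pm a)$ (so $ds = f'(k)\,dk$), control the large-$s$ tail using the algebraic growth of $q$, and obtain the $O(t^{-3/2})$ bound from the cancellation of the leading $q(\pm a)/t$ terms between the two oppositely oriented sub-branches at each saddle. The only cosmetic difference is that you package the cancellation as the $O(\sqrt{s})$ vanishing of $q(k^{(1)}(s))-q(k^{(2)}(s))$ before invoking Watson's lemma, whereas the paper expands each of the four quadrant contributions separately and cancels the $1/t$ terms afterwards.
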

\proofbegin
Write the left-hand side of (\ref{Gammaint}) as the sum $\sum_{j=1}^4 I_j(t)$, where $I_j(t)$ denotes the contribution from $\Gamma_j$, {\it viz.} 
$$I_j(t) :=  e^{-f(a)t} \int_{\Gamma_j} q(k) f'(k) e^{-[f(k) - f(a)] t} dk, \qquad j = 1,\dots, 4.$$

Consider $I_1(t)$. 
For $k \in \Gamma_1$, let $l = f(k) -f(a)$. The  definition of the steepest descent contour implies that the mapping  $l \mapsto k(l)$ is a diffeomorphism 
from $[0, \infty)$  onto $ \Gamma_1$. Thus, a change of variables yields
\begin{align}\label{Isteepest}
I_1(t) = e^{-f(a)t} \int_0^{\infty} q(k(l)) e^{-l t} dl.
\end{align}
For any $\epsilon > 0$, the assumption that $q(k)$ grows at most algebraically as $k \to \infty$ implies that the integral
$$\bigg|\int_\epsilon^{\infty} q(k(l)) e^{-l t} dl\bigg| \leq e^{-\frac{\epsilon t}{2}} \int_\epsilon^{\infty} |q(k(l))| e^{-\frac{l t}{2}} dl \leq C e^{-\frac{\epsilon t}{2}}$$
is exponentially small. 
On the other hand, for $k$ near $a$ we have
$$k(l) = a + \frac{e^{\frac{3i\pi}{4}}}{2\cdot 3^{\frac{1}{4}}} \sqrt{l}
 + O(l), \qquad l \to 0.$$
Thus, if $q$ has the expansion
$$q(k) = \sum_{n=0}^\infty q_n (k - a)^n, \qquad k \to a,$$
then
$$q(k(l)) = q_0 + \frac{e^{\frac{3i\pi}{4}} q_1}{2\cdot 3^{\frac{1}{4}}} \sqrt{l}
+ O(l), \qquad l \to 0.$$
Substituting this into (\ref{Isteepest}) and evaluating the integrals with respect to $l$ leads to
\begin{align}\label{I1}
I_1(t) = e^{-f(a)t} \bigg\{\frac{q_0}{t}
+ \frac{e^{\frac{3i\pi}{4}} q_1}{2\cdot 3^{\frac{1}{4}}} \frac{\sqrt{\pi}}{2t^{\frac{3}{2}}}
+ O\big(t^{-2}\big) \bigg\}, \qquad t \to  \infty.
\end{align}
The last step can be made rigorous using standard arguments from the steepest descent method (see {\it e.g.} \cite{O1974}).  
A similar argument applied to  $\Gamma_4$ provides the asymptotic relation
\begin{align}\label{I4}
I_4(t) = e^{-f(a)t} \bigg\{-\frac{q_0}{t}
+ \frac{e^{\frac{3i\pi}{4}} q_1}{2\cdot 3^{\frac{1}{4}}} \frac{\sqrt{\pi}}{2t^{\frac{3}{2}}}
+  O\big(t^{-2}\big) \bigg\}, \qquad t \to \infty.
\end{align}
Equations (\ref{I1}) and (\ref{I4}) imply that $I_1(t)  + I_4(t) = O(t^{-3/2})$. 

Analogous computations give $I_2(t)  + I_3(t) = O(t^{-3/2})$, thereby establishing Proposition  (\ref{Gammaint}).
\proofend

\begin{remark}\upshape
The proof of Proposition \ref{steepprop} can   be extended to give the expansion of the integral in formula (\ref{Gammaint}) to all orders in $t$. In principle,
a tedious computation can then provide the asymptotic expansions in (\ref{gformulas}) to all orders in $t$. 
\end{remark}

\section{Asymptotics of $Q(t)$ and $R(t)$} \label{QRapp}
\renewcommand{\theequation}{B.\arabic{equation}}\nequation
\vspace{.4cm} 
Here, a proof is offered  that $Q(t)$ and $R(t)$ defined in (\ref{Qdef})-(\ref{Rdef}) are also of order $O(t^{-3/2})$ as $t \to \infty$.

{\bf Asymptotics of $Q(t)$}
To prove that $Q(t) = O(t^{-3/2})$, note that the integrand in (\ref{Qdef}) has removable singularities at the points where $f(k) - f(k') - f(k'') =0$. Moreover, the integrand is non-singular for $k$ in the set $\{0, K, -\bar{K}, L, -\bar{L}\}$, so the contour   $\partial \hat{D}_3$ in the $k$-integral  may be replaced  with $\partial D_3$. Deforming the contour of integration in the $k''$-integral from $\partial \hat{D}_3$ to $\Gamma$ and interchanging the order of the $k'$ and $k''$ integrals, it is found that 
\begin{align*}
Q(t) = & \int_{\partial D_3} dk \frac{k^2f'(k)}{f(k)} 
\int_{\Gamma}dk '' \frac{k''^2f'(k'')}{f(k'')(f(k'') + i\omega)}
 \int_{\partial \hat{D}_3}dk' \frac{k'^2 f'(k')}{f(k') + i\omega} 
	\\
&\times\frac{e^{-f(k'')t -f(k')t} - e^{-f(k)t}}{f(k) -f(k') - f(k'')}.
\end{align*}
Next, deform the contour in the $k$-integral so that it passes to the left (i.e. the indentation lies in $D_3$) of $k = 0$ as well as to the left of the solutions in $\partial D_3$ of the equations $f(k) = 2f(a)$ and $f(k) = -2f(a)$ (see Figure \ref{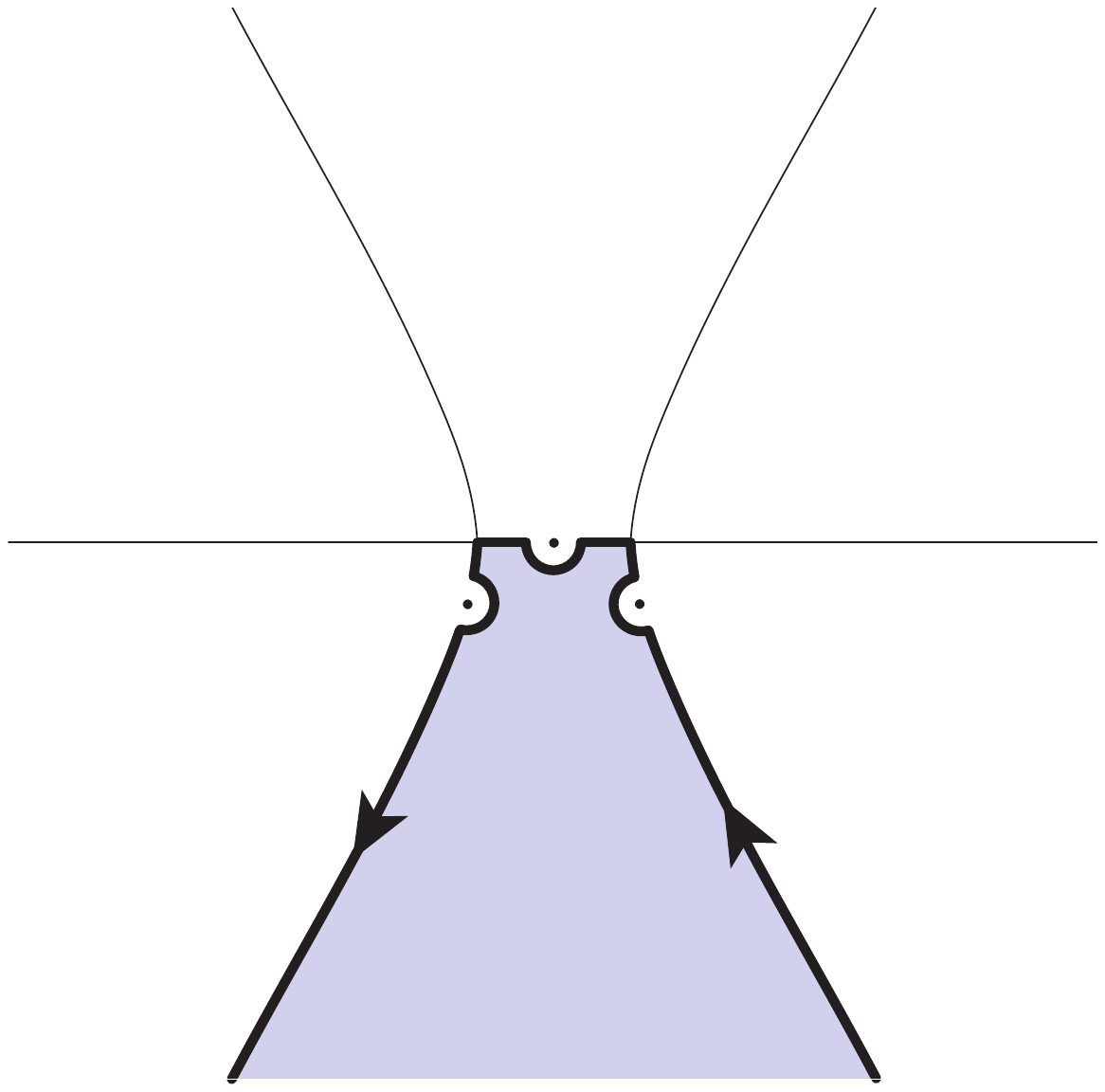}). Denote this deformed contour by $\partial \check{D}_3$.
\begin{figure}
\begin{center}
      \begin{overpic}[width=.5\textwidth]{D3checked.pdf}
      \put(71,24){$\partial \check{D}_3$}
      \end{overpic}
     \begin{figuretext}\label{D3checked.pdf}
       The contour $\partial \check{D}_3$.
     \end{figuretext}
     \end{center}
\end{figure}
For each pair $(k,k'') \in \partial \check{D}_3 \times \Gamma$, one observes that  $\re(f(k) - f(k'')) \leq 0$.  In consequence, 
 there exists a unique point $K_r(k,k'') \in \bar{D}_3$ such that
\begin{align}\label{fff}
f(k) - f(k'') = f(K_r(k,k'')).
\end{align}
Let $\partial \hat{D}_3(k,k'')$ denote the contour $\partial \hat{D}_3$ with a small indentation added so that it passes to the right of $K_r(k,k'')$ whenever $K_r(k,k'') \in \partial D_3$. The following claim implies that this indentation can be chosen in such a way that $\partial \hat{D}_3(k,k'') \subset \bar{D}_2 \cup \bar{D}_4$ for all $k, k''$.

\begin{lemma}  There exists an $\epsilon > 0$ such that 
$K_r(k,k'')$ stays an $\epsilon$-distance away from the critical points $\pm a$ for $(k,k'') \in \partial \check{D}_3 \times \Gamma$.
\end{lemma}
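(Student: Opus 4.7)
The plan is to argue by contradiction. Suppose the conclusion fails; then there is a sequence $(k_n,k''_n) \in \partial\check{D}_3 \times \Gamma$ with $K_r(k_n,k''_n) \to c$ for some $c \in \{a,-a\}$, and by the defining relation $f(K_r)=f(k)-f(k'')$ of $K_r$ one has
\[
f(k_n) - f(k''_n) \longrightarrow f(c).
\]
The argument will be driven by the sign structure of $\re f$: on $\partial\check{D}_3 \subset \bar{D}_3$ one has $\re f \leq 0$, with equality exactly on the undeformed part of $\partial D_3$; on $\Gamma$ one has $\re f \geq 0$, with equality only at the critical points $\pm a$; and $|f|$ is proper on each contour. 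Taking real parts of the displayed limit and using $\re f(c)=0$, the two non-positive quantities $\re f(k_n)$ and $-\re f(k''_n)$ must each tend to $0$. By properness, $(k_n,k''_n)$ then stays bounded, so after passing to a subsequence $k''_n \to k''_\infty \in \{a,-a\}$ and $k_n \to k_\infty$, with $k_\infty$ necessarily lying on the undeformed part of $\partial D_3$.

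Taking imaginary parts of the displayed limit and using $f(a)=-2i/(3\sqrt{3})$ together with $f(-a)=-f(a)$, the four cases $(c,k''_\infty)\in\{a,-a\}^2$ force $f(k_\infty)$ to equal one of the three values $0$, $2f(a)$, $-2f(a)$. A direct enumeration will show that each of the equations $f(k)=0$ and $f(k)=\pm 2f(a)$ has a unique solution on $\partial D_3$: the zeros of $f$ are $0$ and $\pm 1/2$, of which only $0$ lies on $\partial D_3$; and on each of the two hyperbola branches $12x^2-4y^2=1$ of $\partial D_3$, the restriction $\im f = 4x(1-16x^2)$ is strictly monotone, so $\im f$ attains the value $\mp 4/(3\sqrt{3})$ at exactly one point on the relevant branch. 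These three solutions are precisely the points that $\partial\check{D}_3$ was constructed to avoid at a strictly positive distance, so $k_\infty$ cannot exist on $\partial\check{D}_3$, yielding the desired contradiction.

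The main technical checkpoint will be the enumeration of solutions on $\partial D_3$ of the three equations for $f(k_\infty)$, which reduces to the monotonicity computation of $\im f$ along the hyperbola branches and the observation that $\pm 1/2 \notin \partial D_3$. Once this is in hand, compactness of the pre-image of a bounded set under $f$ restricted to $\Gamma$ and to $\partial\check{D}_3$, together with the explicit construction of the indentations of $\partial\check{D}_3$, closes the argument immediately and produces the required uniform $\epsilon$.
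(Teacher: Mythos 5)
Your argument is correct, and it reaches the conclusion by a compactness/contradiction route rather than the paper's direct two-case estimate. The paper fixes $\epsilon$ with $\dist(f(k),\{0,\pm 2f(a)\})>2\epsilon$ on $\partial\check{D}_3$, splits $\Gamma_+$ according to whether $|k''-a|<\delta$ or not, and bounds $|f(k)-f(k'')-f(\pm a)|$ from below by a triangle inequality in the first case and by $\re f(k'')-\re f(k)>c$ in the second; this produces an explicit uniform lower bound directly. You instead extract convergent subsequences and rule out the limiting configuration. The underlying geometric inputs are identical (the sign of $\re f$ on the two contours, and the fact that $\partial\check{D}_3$ is indented around the solutions of $f(k)\in\{0,\pm 2f(a)\}$), so the two proofs buy essentially the same thing; your version has the merit of actually verifying, via the monotonicity of $\im f=4x(1-16x^2)$ along the hyperbola branches, that there are exactly three such solutions on $\partial D_3$ -- a point the paper's construction of $\partial\check{D}_3$ leaves implicit -- and of treating all four sign configurations $(c,k''_\infty)\in\{a,-a\}^2$ uniformly where the paper only writes out $\Gamma_+$.

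One step to tighten: you deduce boundedness of $(k_n,k''_n)$ ``by properness'' immediately after noting $\re f(k_n)\to 0$ and $\re f(k''_n)\to 0$. For $k''_n$ this works, since $\re f$ is proper on $\Gamma$ and vanishes only at $\pm a$. But $\re f\equiv 0$ on the entire undeformed part of $\partial D_3$, including the unbounded hyperbola branches, so $\re f(k_n)\to 0$ alone does not bound $k_n$. The correct chain is: $k''_n$ is bounded, hence $f(k''_n)$ is bounded, hence $f(k_n)=f(k''_n)+\bigl(f(k_n)-f(k''_n)\bigr)$ is bounded, and then properness of $|f|$ on $\partial\check{D}_3$ (which you did list among your hypotheses) bounds $k_n$. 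With that one-line repair the argument is complete.
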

\proofbegin
Let $\Gamma_+ = \Gamma_1 \cup \Gamma_4$ denote the portion  of $\Gamma$ that passes through $a$. It will be shown 
 that $K_r(k,k'')$ stays well  away from $\pm a$ for $k \in \partial \check{D}_3$ and $k'' \in \Gamma_+$.  
 The case when $k'' \in \Gamma_2 \cup \Gamma_3$ can be handled in a similar way. 

In view of (\ref{fff}), $K_r(k,k'')$ can approach $\pm a$ only if $f(k) - f(k'')$ is close to $f(\pm a) = \mp 2i/(3\sqrt{3})$. Let $\epsilon > 0$ be such that $\dist(f(k), \{\pm 2f(a), 0\}) > 2\epsilon$ for $k \in \partial \check{D}_3$. There exists a $\delta > 0$ such that $|f(k'') - f(a)| < \epsilon$ whenever $k'' \in \Gamma_+$ satisfies $|k'' - a| < \delta$. 
Thus, if $k'' \in \Gamma_+$ is such that $|k'' - a| < \delta$ and $k \in \partial \check{D}_3$, then by the triangle inequality,
\begin{align*}
&  |f(k) - f(k'') - f(\pm a)| \geq |f(k) - f(a) - f(\pm a)| - |f(k'') - f(a)|
> 2\epsilon - \epsilon = \epsilon,
\end{align*}
so that $K_r(k,k'')$ stays away from $\pm a$ in this case.
On the other hand, since $\re f(k'') \geq 0$ for $k'' \in \Gamma_+$ and $\re f(k'')$ is small only for $k''$ near $a$, there exists a $c > 0$ such that $\re{f(k'')} > c$ whenever $|k'' - a| \geq \delta$. Also, $\re f(k) \leq 0$ for $k \in \partial \check{D}_3$.
Hence, if $k'' \in \Gamma_+$ is such that $|k'' - a| \geq \delta$ and $k \in \partial \check{D}_3$, then
$$|f(k) - f(k'') - f(\pm a)| \geq |\re(f(k) - f(k'') - f(\pm a))| = \re{f(k'')} - \re{f(k)} > c,$$
showing that $K_r(k,k'')$ also stays away from $\pm a$  in this case.
\proofend 

Since the contour $\partial \hat{D}_3(k,k'')$ avoids the point $K_r(k, k'')$, it is the case that 
$f(k) - f(k') -f(k'') \neq 0$ for all $(k,k'') \in \partial \check{D}_3 \times \Gamma$ and $k' \in \partial \hat{D}_3(k,k'')$. Hence, the integral $Q$ may be split as follows:
\begin{align*}
Q(t) = & \int_{\partial \check{D}_3} dk \frac{k^2f'(k)}{f(k)} 
 \int_{\Gamma}dk '' \frac{k''^2f'(k'')}{f(k'')(f(k'') + i\omega)} \int_{\partial \hat{D}_3(k,k'')} dk' \frac{k'^2 f'(k')}{f(k') + i\omega} 
 	\\
&\times \frac{e^{-f(k'')t -f(k')t}}{f(k) -f(k') - f(k'')}
 	\\
& - \int_{\partial \check{D}_3} dk \frac{k^2f'(k)}{f(k)} 
 \int_{\Gamma} dk '' \frac{k''^2f'(k'')}{f(k'')(f(k'') + i\omega)} \int_{\partial \hat{D}_3(k,k'')}dk' \frac{k'^2 f'(k')}{f(k') + i\omega}
  	\\
&\times\frac{e^{-f(k)t}}{f(k) -f(k') - f(k'')} =: Q_1(t) - Q_2(t).
\end{align*}

To compute $Q_1(t)$,  first use Jordan's lemma to deform the contour $\partial \hat{D}_3(k,k'')$ to $\Gamma$ and then deform the contour $\partial \check{D}_3$ in the $k$-integral to infinity.
Since $k', k'' \in \Gamma$ implies $f(k'), f(k'') \in \pm \frac{2i}{3\sqrt{3}} + \R_{\geq 0}$, it follows that $$|f(k) - f(k') - f(k'')| \geq \text{dist}\bigg(f(k), \bigg\{\pm \frac{4i}{3\sqrt{3}}, 0\bigg\}\bigg), \qquad k \in \bar{D}_3, \;\; k', k'' \in \Gamma.$$
This implies that there exists a $c > 0$ such that $|f(k) - f(k') - f(k'')| \geq c |k|^3$ for all large $k \in \bar{D}_3$. In particular, the integrand is $O(k^{-2})$ as $k \to \infty$ in $\bar{D}_3$. Thus, deforming the contour in the $k$-integral to infinity, observing that $f(k) -f(k') - f(k'')$ has strictly negative real part and hence is nonzero throughout this deformation, it follows  that $Q_1(t) = 0$. 

To compute $Q_2(t)$, remark that by deforming the contour $\partial \hat{D}_3(k,k'')$ to infinity and using Cauchy's theorem, the formula 
\begin{align*}
Q_2(t) =  &\; 2\pi i \int_{\partial \check{D}_3} dk \frac{k^2f'(k)e^{-f(k)t}}{f(k)} 
 \int_{\Gamma}dk '' \frac{k''^2f'(k'')}{f(k'')(f(k'') + i\omega)}
	\\
& \times  \bigg( \underset{k' = K}{\res} +
  \underset{k' = K_r(k,k'')}{\res}\bigg)
 \frac{k'^2 f'(k')}{f(k') + i\omega} \frac{1}{f(k) -f(k') - f(k'')} 	
 	\\
=  &\; 2\pi i \int_{\partial \check{D}_3} dk \frac{k^2f'(k)e^{-f(k)t}}{f(k)} 
 \int_{\Gamma}dk '' \frac{k''^2f'(k'')}{f(k'')(f(k'') + i\omega)}  \frac{K^2 - K_r^2(k,k'')}{f(k) - f(k'') + i\omega}
\end{align*}
emerges.  The function $f(k) - f(k'') + i\omega = 0$ vanishes at $k''= K_r(k,K)$, but this singularity is removable since $K_r(k, K_r(k,K)) = K$. 
In consequence, deforming the contour $\Gamma$ to infinity and using Cauchy's theorem yields
\begin{align*}
Q_2(t) =  &\; (2\pi i)^2 \int_{\partial \check{D}_3} dk \frac{k^2f'(k)e^{-f(k)t}}{f(k)} 
 \underset{k'' = K}{\res}\bigg( \frac{k''^2f'(k'')}{f(k'')(f(k'') + i\omega)}  \frac{K^2 - K_r^2(k,k'')}{f(k) - f(k'') + i\omega}\bigg) 	\\
=  &\; (2\pi i)^2 \int_{\partial \check{D}_3} dk \frac{k^2f'(k)e^{-f(k)t}}{f(k)} 
    \frac{K^2}{-i\omega}  \frac{K^2 - K_r^2(k,K)}{f(k) + 2i\omega}.	
\end{align*}
Since $K_r(L,K) = K$, the singularity at $k = L$ is removable.
Deforming the contour to $\Gamma$ and applying Proposition \ref{steepprop} yields $Q_2(t) = O(t^{-3/2})$, thereby showing that $Q(t) = O(t^{-3/2})$.  

\vspace{.4cm}

{\bf Asymptotics of $R(t)$}
To show that $R(t) = O(t^{-3/2})$, notice that the integrand in (\ref{Rdef}) has removable singularities at the points where $f(k) - f(k') + i\omega =0$.
To proceed, first deform the contour in the $k'$-integral from $\partial \hat{D}_3$ to $\Gamma$ and then  
 deform the contour in the $k$-integral from $\partial \hat{D}_3$ to $\partial \check{D}_3$.  However, contrary to the earlier notation,  
 the contour $\partial \check{D}_3$  is now obtained from $\partial D_3$ by inserting indentations so that it passes to the left of the origin and of the points $K_r(K, \pm a) \in \partial D_3$ at which $f(k) \pm f(a) + i\omega = 0$ (see again Figure  \ref{D3checked.pdf}). 
Since $\re f(k) \leq 0$ for $k \in \partial \check{D}_3$ and $\re f(k') \geq 0$ for $k' \in \Gamma$ with equality only at $k' = \pm a$, $f(k) - f(k') + i\omega$ can vanish only when $k' = \pm a$ and $k = K_r(K, \mp a)$. In particular, $f(k) - f(k') + i\omega \neq 0$ for $k \in \partial \check{D}_3$ and $k' \in \Gamma$. Thus the integral can be split, {\it viz.}
\begin{align*}
R(t) = & \;e^{i\omega t}\int_{\partial \check{D}_3} dk \frac{k^2f'(k)}{f(k)} \int_{\Gamma}dk' \frac{k'^2 f'(k')}{f(k') + i\omega}
\frac{e^{-f(k')t}}{f(k) - f(k') + i\omega} 
	\\
& - \int_{\partial \check{D}_3} dk \frac{k^2f'(k)}{f(k)} \int_{\Gamma}dk' \frac{k'^2 f'(k')}{f(k') + i\omega}
\frac{e^{-f(k)t}}{f(k) - f(k') + i\omega} =: R_1(t) - R_2(t).
\end{align*}
 
To compute $R_1(t)$, observe that $k' \in \Gamma$ implies $f(k') \in \pm \frac{2i}{3\sqrt{3}} + \R_{\geq 0}$. It thus transpire that 
$$|f(k) - f(k') + i\omega| \geq \text{dist}\bigg(f(k), i\omega \pm \frac{2i}{3\sqrt{3}}\bigg), \qquad k \in \bar{D}_3, \;\; k' \in \Gamma.$$
Thus, the integrand is $O(k^{-2})$ as $k \to \infty$   in $\bar{D}_3$. Deforming the contour in the $k$-integral to infinity in $D_3$, it follows that $R_1(t) = 0$. 

For the computation of  $R_2(t)$, remark that 
\begin{align*}
R_2(t) = & \;2\pi i \int_{\partial \check{D}_3} dk \frac{k^2f'(k)e^{-f(k)t}}{f(k)} \bigg(\underset{k' = K}{\res} + \underset{k' = K_r(k,K)}{\res}\bigg) \frac{k'^2 f'(k')}{f(k') + i\omega} \frac{1}{f(k) - f(k') + i\omega} 
	\\
= & \; 2\pi i \int_{\partial \check{D}_3} dk \frac{k^2f'(k)e^{-f(k)t}}{f(k)}
\frac{K^2 - K_r^2(k,K)}{f(k) + 2i\omega}.	
\end{align*}
Since $K_r(L,K) = K$, the singularity at $k = L$ is removable. Deforming the contour to $\Gamma$ and applying Proposition \ref{steepprop}, 
it is concluded that 
 $R_2(t) = O(t^{-3/2})$, thereby establishing  that $R(t) = O(t^{-3/2})$.

\section{An alternative perturbative approach} \label{perturbapp}
\renewcommand{\theequation}{C.\arabic{equation}}\nequation
In Theorem \ref{mainth} it was shown that if $g_0(t) = \epsilon \sin{\omega t}$, then $g_1$ and $g_2$ are asymptotically periodic as $t \to \infty$, 
at least  to second order in perturbation theory. We also computed the large $t$ asymptotics and gave rigorous error estimates for $g_1$ and $g_2$ to the same order.

If one {\it assumes} that $g_1$ and $g_2$ are asymptotically periodic as $t \to \infty$ with period $2\pi/\omega$, and if one does not worry about precise error
 estimates, the coefficients in (\ref{gformulas}) can be determined 
  directly using an alternative perturbative approach.  This idea was
   first implemented for the nonlinear Schr\"odinger equation in \cite{FLtperiodic}.

The KdV equation (\ref{kdv}) admits the Lax pair
\begin{align}\label{lax}
\begin{cases}
\varphi_x + ik\sigma_3 \varphi = V_1 \varphi,
	\\
\varphi_t + i(4k^3 - k)\sigma_3 \varphi = V_2 \varphi,
\end{cases}
\end{align}
where $\varphi(x,t,k)$ is a vector-valued eigenfunction, $k \in \C$ is the spectral parameter, $\{V_j(x,t,k)\}_1^2$ are defined by
\begin{align*}
& V_1 = \frac{u}{2k} (\sigma_2 - i \sigma_3), \qquad V_2 = 2ku\sigma_2 + u_x \sigma_1 + \frac{2u^2 + u + u_{xx}}{2k}(i\sigma_3 - \sigma_2),
\end{align*}
and $\{\sigma_j\}_1^3$ denote the standard Pauli matrices. Letting $\varphi_1$ and $\varphi_2$ denote the first and second entries of $\varphi$ respectively, the $t$-part of (\ref{lax}) can be written as
\begin{align*}
 & \varphi_{1t} + \frac{i}{2k}(8k^4 - 2k^2 - u - 2u^2 - u_{xx}) \varphi_1 
  + \Big[2ik u - u_x - \frac{i}{2k}(u + 2u^2 + u_{xx})\Big]\varphi_2 = 0,
  	\\
& \varphi_{2t} - \frac{i}{2k}(8k^4 - 2k^2 - u - 2u^2 - u_{xx})\varphi_2
- \Big[2iku + u_x - \frac{i}{2k}(u + 2u^2 + u_{xx})\Big] \varphi_1 = 0.	
\end{align*}
This in turn  implies that the quotient $p = \varphi_1/\varphi_2$ satisfies the Ricatti equation
\begin{align}\nonumber
& i p_t + \frac{1}{2k}((1- 4k^2) u + 2u^2 + 2ik u_x + u_{xx}) p^2
+ \frac{1}{k}(2k(k -4k^3) + u + 2u^2 + u_{xx}) p
	\\ \label{ratioeq}
& \hspace{3.35cm}  + \frac{1}{2k}((1 - 4k^2) u + 2u^2 - 2ik u_x + u_{xx}) = 0.
\end{align}

{\it Assume} that the functions $\{g_j(t)\}_0^2$ and $p(t,k)$ are asymptotically time-periodic of period $t_p = 2\pi /\omega$ as $t \to \infty$.  
The functions to which they are asymptotic then have Fourier series representations 
\begin{align} \nonumber
 & g_0(t) \sim \sum_{n=-\infty}^\infty a_n e^{in\omega t} , \qquad 
  g_1(t) \sim \sum_{n=-\infty}^\infty b_n e^{in\omega t}, 
  	\\\label{g1dns}
&  g_2(t) \sim \sum_{n=-\infty}^\infty c_n e^{in\omega t}, \qquad
p(t,k) \sim \sum_{n=-\infty}^\infty d_n(k) e^{in\omega t}, \qquad t \to \infty.
\end{align}   
Substituting these representations into equation (\ref{ratioeq}) evaluated at $x = 0$ leads directly to the equation
\begin{align*}
& \sum_{n=-\infty}^\infty 
\biggl\{- n\omega d_n(k) 
+ \frac{1 - 4k^2}{2k} \sum_{l,m=-\infty}^\infty a_l  d_m(k)d_{n-l-m}(k)
	\\
& \hspace{1cm}  +  \frac{1}{k}\sum_{j,l,m=-\infty}^\infty a_j a_l d_m(k)d_{n-j-l-m}(k) 
   + i \sum_{l,m=-\infty}^\infty b_l d_m(k)d_{n-l-m}(k) 
	\\
& \hspace{1cm}   +  \frac{1}{2k} \sum_{l,m=-\infty}^\infty c_l d_m(k)d_{n-l-m}(k) 
  + 2(k - 4k^3)  d_n(k)  
+ \frac{1}{k}\sum_{m=-\infty}^\infty a_m d_{n-m}(k) 
	\\
& \hspace{1cm}   + \frac{2}{k}\sum_{l,m=-\infty}^\infty a_l a_m d_{n-l-m}(k) 
+ \frac{1}{k} \sum_{m=-\infty}^\infty c_m d_{n-m}(k) 
	\\ 
&  \hspace{1cm}   + \frac{1 - 4k^2}{2k} a_n  
+ \frac{1}{k}\sum_{m=-\infty}^\infty a_m a_{n-m}   
 - i  b_n   + \frac{1}{2k} c_n \biggr\} e^{in\omega t}  = 0.
\end{align*}
This in turn yields the infinite hierarchy of equations
\begin{align}\nonumber
 d_n(k)
= &\; \frac{i}{f(k) + in\omega}\biggl\{
\frac{1 - 4k^2}{2k} \sum_{l,m=-\infty}^\infty a_l  d_m(k)d_{n-l-m}(k)
	\\\nonumber
&  +  \frac{1}{k}\sum_{j,l,m=-\infty}^\infty a_j a_l d_m(k)d_{n-j-l-m}(k) 
   + i \sum_{l,m=-\infty}^\infty b_l d_m(k)d_{n-l-m}(k) 
	\\\nonumber
& +  \frac{1}{2k} \sum_{l,m=-\infty}^\infty c_l d_m(k)d_{n-l-m}(k)   
+ \frac{1}{k}\sum_{m=-\infty}^\infty a_m d_{n-m}(k) 
	\\ \nonumber
& + \frac{2}{k}\sum_{l,m=-\infty}^\infty a_l a_m d_{n-l-m}(k) 
+ \frac{1}{k} \sum_{m=-\infty}^\infty c_m d_{n-m}(k) 
	\\ \label{algebraicsystem}
& + \frac{1 - 4k^2}{2k} a_n  
+ \frac{1}{k}\sum_{m=-\infty}^\infty a_m a_{n-m}  
 - i  b_n   + \frac{1}{2k} c_n\biggr\}, \quad  n \in \Z.
\end{align}

\subsection{Perturbative solution of the algebraic system}
The algebraic system (\ref{algebraicsystem}) may be solved perturbatively 
if the Fourier coefficients $a_n$ associated with the Dirichlet data are known. 
Indeed, a perturbative analysis of (\ref{algebraicsystem}) yields expressions for the coefficients $d_n(k)$ in terms of the $a_n$, $b_n$, and $c_n$. The condition that $d_n(k)$ be non-singular in $\bar{D}_1$ then yields expressions for the Fourier coefficients $b_n$ and $c_n$ associated with the Neumann values in terms of the $a_n$. This provides a straightforward, constructive approach to the Dirichlet to Neumann map.

In more detail, first substitute the expansions
\begin{align*}
& a_n = \epsilon a_{1,n} + \epsilon^2 a_{2,n} + O(\epsilon^3), && \epsilon \to 0, \quad n \in \Z,
	\\
& b_n = \epsilon b_{1,n} + \epsilon^2 b_{2,n} + O(\epsilon^3), && \epsilon \to 0, \quad n \in \Z,
	\\
& c_n = \epsilon c_{1,n} + \epsilon^2 c_{2,n} + O(\epsilon^3), && \epsilon \to 0, \quad n \in \Z,	
	\\
& d_n(k) = \epsilon d_{1,n}(k) +  \epsilon^2 d_{2,n}(k) + O(\epsilon^3), && \epsilon \to 0, \quad n \in \Z,
\end{align*}
into (\ref{algebraicsystem}).  The terms of $O(\epsilon)$ lead to the formulas 
\begin{align}\label{orderepsilon}
d_{1,n}(k) = \frac{i}{f(k) + in\omega}\biggl\{
\frac{1 - 4k^2}{2k} a_{1,n} - i  b_{1,n} + \frac{1}{2k} c_{1,n}\biggr\}, \qquad n \in \Z.
\end{align}
For $n \in \Z$, let $k_1(n)$, $k_2(n)$, and $k_3(n)$ denote the three roots of $f(k) + in\omega = 0$, ordered so that 
$$k_1(n) \in \partial D_1', \qquad k_2(n) \in \partial D_1'', \qquad k_3(n) \in \partial D_3.$$ 
These roots satisfy the identities
\begin{align}\nonumber
& k_1(n) + k_2(n) + k_3(n) = 0, \qquad k_1(n)k_2(n) + k_1(n)k_3(n) + k_2(n)k_3(n) = -\frac{1}{4},
	\\ \label{krelations}
& k_1(n)k_2(n)k_3(n) = -\frac{n\omega}{8}, \qquad k_3(-n) = -\overline{k_3(n)}, \qquad n \in \Z.
\end{align}

For $d_{1,n}(k)$ to be nonsingular at $k_1(n)$ and $k_2(n)$, it is required that 
$$\frac{1 - 4k_j^2(n)}{2k_j(n)} a_{1,n} - i  b_{1,n} + \frac{1}{2k_j(n)} c_{1,n} = 0, \qquad n \in \Z, \quad j = 1,2.$$
Solving these equations for $\{b_{1,n}, c_{1,n}\}$ and using (\ref{krelations}), there appears the formulas
\begin{align}\label{bcsolvedfor}
& b_{1,n} = -2ik_3(n) a_{1,n},  \qquad c_{1,n} = -4k_3^2(n) a_{1,n}, \qquad n \in \Z.
\end{align}
The equations (\ref{orderepsilon}) thus lead to 
\begin{align}\label{d1nk}
d_{1,n}(k) = \frac{ia_{1,n}}{f(k) + in\omega}\biggl\{
\frac{1 - 4k^2}{2k} - 2k_3(n)  - \frac{2}{k} k_3^2(n) \biggr\}, \qquad n \in \Z.
\end{align}
Similarly, the terms of $O(\epsilon^2)$ give
\begin{align} \nonumber
d_{2,n}(k)
= &\; \frac{i}{f(k) + in\omega}\biggl\{
\frac{1}{k}\sum_{m=-\infty}^\infty a_{1,m} d_{1,n-m}(k) 
+ \frac{1}{k} \sum_{m=-\infty}^\infty c_{1,m} d_{1,n-m}(k) 
	\\ \label{orderepsilon2}
&+ \frac{1 - 4k^2}{2k} a_{2,n} + \frac{1}{k}\sum_{m=-\infty}^\infty a_{1,m} a_{1,n-m}   
 - i  b_{2,n}   + \frac{1}{2k} c_{2,n}\biggr\}, \qquad n \in \Z.
\end{align}
The condition that $d_{2,n}(k)$ should not have singularities at $k_1(n)$ and $k_2(n)$ determines the coefficients $b_{2,n}$ and $c_{2,n}$. 
This process can be continued indefinitely. Indeed, the terms of order $O(\epsilon^m)$ yield an equation of the form
\begin{align*}
d_{m,n}(k) = &\;\frac{i}{f(k) + in\omega}\biggl\{
F_{mn}(k)  + \frac{1 - 4k^2}{2k} a_{m,n} - i  b_{m,n}   + \frac{1}{2k} c_{m,n} \biggr\}, \qquad n \in \Z,
\end{align*}
where the function $F_{mn}(k)$ is given in terms of (known) lower order terms.
The condition that $d_{m,n}(k)$ should not have singularities at $k_1(n)$ and $k_2(n)$ then implies that 
\begin{align*}
& b_{m,n} = 2i a_{m,n}(k_1(n) + k_2(n))
- i\frac{ k_1(n) F_{mn}(k_1(n)) - k_2(n)F_{mn}(k_2(n))}{k_1(n) - k_2(n)},
	\\
& c_{m,n} = -a_{m,n}(1 + 4k_1(n) k_2(n))
+\frac{2 k_1(n) k_2(n)(F_{mn}(k_1(n)) - F_{mn}(k_2(n)))}{k_1(n) - k_2(n)}.
\end{align*}
Once $\{b_{m,n}\}_{n=-\infty}^\infty$, $\{c_{m,n}\}_{n=-\infty}^\infty$, $\{d_{m,n}(k)\}_{n=-\infty}^\infty$ are determined, one may proceed to the next order.

\begin{example}\upshape
Consider the example wherein
$$g_0(t) = \epsilon \sin \omega t$$
with $\omega \neq 0$ satisfying (\ref{omegaassumption}). Then $a_{m,n} = 0$ for all $m \geq 2$ and 
$$a_{1,n} = \begin{cases} 
\frac{1}{2i}, & n = 1, 
	\\
-\frac{1}{2i}, & n = -1, 
	\\
0, & \text{otherwise}.
\end{cases}$$
As in Theorem \ref{mainth},  write $K$ and $L$ for $k_3(1)$ and $k_3(2)$ respectively. Equations (\ref{bcsolvedfor}) and (\ref{d1nk}) provide 
the formulas
\begin{align}\nonumber
&b_{1,1} = \bar{b}_{1,-1} = -K, \qquad c_{1,1} = \bar{c}_{1,-1} = 2i K^2,
	\\\nonumber
& d_{1,1}(k) = \frac{1}{2(f(k) + i\omega)}\biggl\{
\frac{1 - 4k^2}{2k} - 2K  - \frac{2K^2}{k} \biggr\},
	\\\nonumber
& d_{1,-1}(k) = -\frac{1}{2(f(k) - i\omega)}\biggl\{
\frac{1 - 4k^2}{2k} + 2\bar{K}  - \frac{2\bar{K}^2}{k} \biggr\},
	\\\label{order1sol}
& b_{1,n} = c_{1,n} = d_{1,n} = 0, \qquad n \neq \pm 1,
\end{align}
whence 
\begin{align*}
& g_{11}(t) \sim - K  e^{i\omega t}  -\bar{K}  e^{-i\omega t}
\;\;  {\rm and} \;\;  g_{21}(t) \sim  2iK^2 e^{i\omega t}- 2i \bar{K}^2 e^{-i\omega t},   \;\; t \to \infty.
\end{align*}
Thus, apart from the error terms, the expressions in (\ref{g11end}) and (\ref{g21end}) have been recovered.

Similarly, equation (\ref{orderepsilon2}) yields
\begin{align*}
& d_{2,2}(k)
=  \; \frac{i}{f(k) + 2i\omega}\biggl\{
\frac{1}{k}a_{1,1} d_{1,1}(k) 
+ \frac{1}{k} c_{1,1} d_{1,1}(k) 
+ \frac{1}{k} a_{1,1}^2  
- i  b_{2,2}   + \frac{1}{2k} c_{2,2}\biggr\},
	\\
& d_{2,1}(k) = \; \frac{i}{f(k) + i\omega}\biggl\{- i  b_{2,1} + \frac{1}{2k} c_{2,1}\biggr\},
	\\
& d_{2,0}(k)
=  \;\frac{i}{kf(k)}\biggl\{
 (a_{1,1}+  c_{1,1} ) d_{1,-1}(k) 
+ (a_{1,-1} +  c_{1,-1}) d_{1,1}(k) 
	\\
& \hspace{3.95cm} + 2 a_{1,1} a_{1,-1} - i k b_{2,0} + \frac{1}{2} c_{2,0}\biggr\},
	\\
& d_{2,-1}(k)
= \; \frac{i}{f(k) - i\omega}\biggl\{ - i  b_{2,-1}   + \frac{1}{2k} c_{2,-1}\biggr\},
	\\
& d_{2,-2}(k)
= \; \frac{i}{f(k) - 2i\omega}\biggl\{
\frac{1}{k}a_{1,-1} d_{1,-1}(k) + \frac{1}{k} c_{1,-1} d_{1,-1}(k) 
+ \frac{1}{k} a_{1,-1}^2  
 - i  b_{2,-2}
 	\\
&   \hspace{4.15cm}  +\frac{1}{2k} c_{2,-2}\biggr\}.	
\end{align*}
For $d_{2,2}(k)$ to not have singularities at $k_1(2)$ and $k_2(2)$, it must be the case that 
$$\begin{cases}
(a_{1,1} + c_{1,1}) d_{1,1}(k_1(2)) + a_{1,1}^2 - i k_1(2) b_{2,2} + \frac{c_{2,2}}{2} = 0, 
	\\
(a_{1,1} + c_{1,1}) d_{1,1}(k_2(2)) + a_{1,1}^2 - i k_2(2) b_{2,2} + \frac{c_{2,2}}{2} = 0.
\end{cases}$$
Solving for $b_{2,2}$ and $c_{2,2}$ leads to 
\begin{align*}
& b_{2,2} = -i(a_{1,1} + c_{1,1})  \frac{d_{1,1}(k_1(2)) - d_{1,1}(k_2(2))}{k_1(2) - k_2(2)},
	\\
& c_{2,2} = -2 a_{1,1}^2 -2 (a_{1,1} +  c_{1,1}) \frac{k_1(2) d_{1,1}(k_2(2)) - k_2(2) d_{1,1}(k_1(2))}{k_1(2) - k_2(2)}.
\end{align*}
Substituting in  (\ref{order1sol}) and using the identities
$$k_1(2)k_2(2) = -\frac{\omega}{4L}, \qquad k_1(2) + k_2(2) = -L, \qquad 4L^2 - 4K^2 = -\frac{\omega}{L} + \frac{\omega}{2K},$$
we find after some algebraic manipulations that
$$b_{2,2} = \frac{1}{8i}\bigg(\frac{L}{K^2} - \frac{2}{K}\bigg),
\qquad c_{2,2} = 1 - \frac{L^2}{4K^2}.$$
For $d_{2,1}(k)$ to not have singularities at $k_1(1)$ and $k_2(1)$, it is required that 
$$\begin{cases}
-ib_{2,1} + \frac{c_{2,1}}{2k_1(1)} = 0, 
	\\
-ib_{2,1} + \frac{c_{2,1}}{2k_2(1)} = 0,
\end{cases}
$$
which implies that $b_{2,1} = c_{2,1} = 0$. 
Similar considerations yield
$$b_{2,0} = \frac{1}{2} \im\bigg(\frac{1}{K}\bigg),\qquad
c_{2,0} = \re\bigg(\frac{K}{\bar{K}}\bigg) -1, $$
and 
$$b_{2,-1} = c_{2,-1} = 0, \qquad b_{2,-2} = \bar{b}_{2,2}, \qquad
c_{2,-2} = \bar{c}_{2,2}.$$
\vspace{.3cm}
Apart from the error terms, this recovers the expansions in (\ref{g12end}) and (\ref{g22end}).
\end{example}

\bigskip
\noindent
{\bf Acknowledgement} {\it The gestation of this project took place when the authors were both participating in a conference 
held at the Schr\"odinger Institute in Vienna.  
The authors are especially grateful to the American Institute of Mathematics for providing an excellent environment for discussions during a weeklong 
National Science Foundation USA supported
 workshop on boundary-value problems for nonlinear, dispersive equations. JL thanks the University of Illinois at Chicago for support during a visit there and also acknowledges support from the European Research Council, Grant Agreement No. 682537, the Swedish Research 
 Council, Grant No. 2015-05430, the G\"oran Gustafsson Foundation, and the EPSRC, UK. JB is grateful to Southern University of Science 
and Technology in Shenzhen  for providing support for a visit there in the course of  which the manuscript was finalized.

\bibliographystyle{plain}
\bibliography{is}

\end{document}